\newcommand{\techRep}{true} 
\newcommand{\iftechrep}{\ifthenelse{\equal{\techRep}{true}}}
\tikzstyle{state}=[draw,circle,inner sep=0.5mm, minimum size=6mm]
\tikzstyle{distr}=[draw,fill,circle,minimum size=2mm,inner sep=0mm]
\tikzstyle{arr}=[-latex']
\tikzstyle{ran}=[rounded corners,thick,draw,minimum size=1.4em,inner sep=.5ex]
\tikzstyle{tran}=[thick,draw,->,>=stealth]
\newcommand{\Acc}{\mathit{Acc}}
\newcommand{\INC}{\mathsf{INC}}
\newcommand{\distINC}{\mathsf{dist}}
\newcommand{\dist}[1]{{\cal D}(#1)} 
\newcommand{\support}[1]{\mathit{support}(#1)} 
\newcommand{\tran}[1]{\xrightarrow{#1}}
\newcommand{\N}{\mathbb{N}}
\newcommand{\Q}{\mathbb{Q}}
\newcommand{\F}{{\cal F}}
\newcommand{\Sigmac}{\Sigma_c}
\newcommand{\Sigmar}{\Sigma_r}
\newcommand{\Sigmai}{\Sigma_{\mathit int}}
\newcommand{\ac}{a_c}
\newcommand{\ar}{a_r}
\newcommand{\ai}{a_{\mathit int}}
\newcommand{\se}[1]{{[#1]}}
\newcommand{\lift}[1]{#1\mathclose\uparrow}
\newenvironment{qtheorem}[1]{%
{\mbox{}\newline\noindent\bf Theorem #1.}
\begin{itshape}%
}{%
\end{itshape}%
}
\title{Bisimilarity of Probabilistic Pushdown Automata
}
\author[1]{Vojt\v{e}ch Forejt}
\author[2]{Petr Jan\v{c}ar}
\author[1]{Stefan Kiefer}
\author[1]{James Worrell}
\affil[1]{Department of Computer Science, University of Oxford, UK}
\affil[2]{Department of Computer Science, FEI,
Techn.~Univ.~Ostrava, Czech Republic}
\authorrunning{V. Forejt and P. Jan\v{c}ar and S. Kiefer and J. Worrell}
\subjclass{F.3.1 Specifying and Verifying and Reasoning about Programs}
\keywords{bisimilarity, probabilistic systems, pushdown automata}
\begin{document}
\sloppy
\maketitle

\begin{abstract}
We study the bisimilarity problem for probabilistic pushdown automata
(pPDA) and subclasses thereof.  Our definition of pPDA allows both
probabilistic and non-deterministic branching, generalising the
classical notion of pushdown automata (without
$\varepsilon$-transitions).  Our first contribution is a general
construction that reduces checking bisimilarity of probabilistic
transition systems to checking bisimilarity of non-deterministic
transition systems.  This construction directly yields decidability of
bisimilarity for pPDA, as well as an elementary upper bound for the
bisimilarity problem on the subclass of probabilistic basic process
algebras, i.e., single-state pPDA.  We further show that, with careful
analysis, the general reduction can be used to prove an EXPTIME upper
bound for bisimilarity of probabilistic visibly pushdown automata.
Here we also provide a matching lower bound, establishing
EXPTIME-completeness.  Finally we prove that deciding bisimilarity of
probabilistic one-counter automata, another subclass of pPDA, is
PSPACE-complete.  Here we use a more specialised argument to obtain
optimal complexity bounds.
\end{abstract}

\section{Introduction}
Equivalence checking is the problem of determining whether two systems
are semantically identical.  This is an important question in
automated verification and, more generally, represents a line of
research that can be traced back to the inception of theoretical
computer science.  A great deal of work in this area has been devoted
to the complexity of \emph{bisimilarity} for various classes of
infinite-state systems related to grammars, such as one-counter
automata, basic process algebras, and pushdown automata,
see~\cite{Burkart00} for an overview.  We mention in particular the
landmark result showing the decidability of bisimilarity for pushdown
automata~\cite{Senizergues05}.

In this paper we are concerned with probabilistic pushdown automata
(pPDA), that is, pushdown automata with both non-deterministic and
probabilistic branching.  In particular, our pPDA generalise classical
pushdown automata without $\varepsilon$-transitions.  We refer to
automata with only probabilistic branching as \emph{fully
  probabilistic}.

We consider the complexity of checking bisimilarity for probabilistic
pushdown automata and various subclasses thereof.  The subclasses we
consider are probabilistic versions of models that have been
extensively studied in previous
works~\cite{Burkart00,SrbaVisiblyPDA:2009}.  In particular, we
consider probabilistic one-counter automata (pOCA), which are
probabilistic pushdown automata with singleton stack alphabet;
probabilistic Basic Process Algebras (pBPA), which are single-state
probabilistic pushdown automata; probabilistic visibly pushdown
automata (pvPDA), which are automata in which the stack action,
whether to push or pop, for each transition is determined by the input
letter.  Probabilistic one-counter automata have been studied in the
classical theory of stochastic processes as \emph{quasi-birth-death
  processes}~\cite{EWY10}.  Probabilistic BPA seems to have been
introduced in~\cite{Brazdil08}.


While the complexity of bisimilarity for finite-state probabilistic
automata is well understood~\cite{Baier96,CBW12}, there are
relatively few works on equivalence of infinite-state probabilistic
systems.  Bisimilarity of probabilistic BPA was shown decidable
in~\cite{Brazdil08}, but without any complexity bound.
In~\cite{FuKatoen11} probabilistic simulation between probabilistic
pushdown automata and finite state systems was studied.



\subsection{Contribution}

The starting point of the paper is a construction that can be used to
reduce the bisimilarity problem for many classes of probabilistic
systems to the bisimilarity problem for their non-probabilistic
counterparts. The reduction relies on the observation that in the
bisimilarity problem, the numbers that occur as probabilities in a
probabilistic system can be ``encoded'' as actions in the
non-probabilistic system. This comes at the price of an exponential
blow-up in the branching size, but still allows us to establish
several new results.  It is perhaps surprising that there is a
relatively simple reduction of probabilistic bisimilarity to ordinary
bisimilarity.  Hitherto it has been typical to establish decidability
in the probabilistic case using bespoke proofs, see,
e.g.,~\cite{Brazdil08,FuKatoen11}.  Instead, using our reduction, we
can leverage the rich theory that has been developed in the
non-probabilistic case.

The main results of the paper are as follows:
\begin{itemize}
\item Using the above-mentioned reduction together with the result
  of~\cite{Senizergues05}, we show that bisimilarity for probabilistic
  pushdown automata is decidable.
\item For the subclass of probabilistic
  BPA, i.e., automata with a single control state, the same reduction
  yields a 3EXPTIME upper bound for checking bisimilarity via a doubly
  exponential procedure for bisimilarity on BPA~\cite{Burkart00} (see
  also~\cite{Jancar12}).  This improves the result
  of~\cite{Brazdil08}, where only a decidability result was given
  without any complexity bound. An EXPTIME lower bound for this problem
  follows from the recent work of~\cite{Kiefer12} for non-probabilistic systems.

\item For probabilistic visibly pushdown automata, the above reduction
  immediately yields a 2EXPTIME upper bound. However we show that with
  more careful analysis we can extract an EXPTIME upper bound.  In
  this case we also show EXPTIME-hardness, thus obtaining matching
  lower and upper bounds.

\item
For fully probabilistic one-counter automata we obtain matching lower
and upper \mbox{PSPACE} bounds for the bisimilarity problem. In both cases
the bounds are obtained by adapting constructions from the
non-deterministic case~\cite{SrbaVisiblyPDA:2009,BGJ:Concur10} rather
than by using the generic reduction described above.
\end{itemize}

\section{Preliminaries}
Given a countable set $A$, a \emph{probability distribution} on $A$ is
a function $d: A \rightarrow [0,1] \cap \Q$ (the rationals)
such that $\sum_{a\in A}
d(a) =1$.  A probability distribution is {\em Dirac} if it assigns $1$
to one element and $0$ to all the others.  The \emph{support} of a
probability distribution $d$ is the set $\support{d}:=\{ a\in A :
d(a)>0\}$.  The set of all probability distributions on $A$ is
denoted by $\dist{A}$.


\subsection{Probabilistic Transition Systems.}

A \emph{probabilistic labelled transition system (pLTS)} is a tuple
$\mathcal{S} = (S,\Sigma,\tran{})$, where $S$ is a finite or countable set of
\emph{states}, $\Sigma$ is a finite input \emph{alphabet}, and
$\mathop{\tran{}} \subseteq S\times \Sigma \times {\dist{S}}$ is a
\emph{transition relation}.  We write $s\tran{a} d$ to say that
$(s,a,d)\in\mathop{\tran{}}$.
We also write $s \tran{} s'$ to say that there exists $s \tran{a} d$ with $s' \in \support{d}$.
We assume that $\mathcal{S}$ is
finitely branching, i.e., each state $s$ has finitely many transitions
$s\tran{a}d$.  In general a pLTS combines probabilistic and
non-deterministic branching.  A pLTS is said to be \emph{fully
  probabilistic} if for each state $s\in S$ and action $a \in \Sigma$
we have $s \tran{a} d$ for at most one distribution~$d$.  Given a
fully probabilistic
pLTS, we write $s\tran{a,x}s'$ to say that there
is $s\tran{a}d$ such that $d(s')=x$.

Let $\mathcal{S} = (S,\Sigma,\tran{})$ be a pLTS and $R$ an
equivalence relation on $S$.  We say that two distributions $d,d' \in
\mathcal{D}(S)$ are \emph{$R$-equivalent} if for all $R$-equivalence
classes $E$, $\sum_{s \in E} d(s) = \sum_{s \in E }d'(s)$.  We
furthermore say that $R$ is a \emph{bisimulation relation} if $s
\mathrel{R} t$ implies that for each action $a\in \Sigma$ and
each transition $s\tran{a}d$ there is a transition $t\tran{a} d'$ such
that $d$ and $d'$ are $R$-equivalent.
The union of all bisimulation
relations of $\mathcal{S}$ is itself a bisimulation relation.  This
relation is called \emph{bisimilarity} and is denoted
$\sim$~\cite{SL94}.

We also have the following inductive characterisation of bisimilarity.
Define a decreasing sequence of equivalence relations $\mathop{\sim_0}\supseteq
\mathop{\sim_1}\supseteq \mathop{\sim_2}\supseteq\cdots$ by putting $s
\sim_0 t$ for all $s,t$, and $s\sim_{n+1}t$ if and only if for all $a\in
\Sigma$ and $s\tran{a}d$ there is $t\tran{a} d'$ such that $\sum_{s
  \in E} d(s) = \sum_{s \in E}d'(s)$ for all $\sim_n$-equivalence
classes $E$.  It is then straightforward that the
sequence $\sim_n$ converges to $\sim$, i.e.,
$\bigcap_{n\in\N}\mathop{\sim_n}=\mathop{\sim}$.



%

\subsection{Probabilistic Pushdown Automata.}

A \emph{probabilistic pushdown automaton} (pPDA) is a tuple $\Delta =
(Q,\Gamma,\Sigma,\btran{})$ where $Q$ is a finite set of
\emph{states}, $\Gamma$ is a finite \emph{stack alphabet}, $\Sigma$ is a finite
\emph{input alphabet}, and $\mathop{\btran{}} \subseteq Q\times \Gamma
\times \Sigma \times {\dist{Q\times \Gamma^{\le 2}}}$ (with
$\Gamma^{\le 2} := \{\varepsilon\} \cup \Gamma \cup (\Gamma \times
\Gamma)$)
(where $\varepsilon$ denotes the empty string).

When speaking of the \emph{size} of~$\Delta$, we assume that the
probabilities in the transition relation are given as
quotients of integers written in binary.
A tuple $(q,X)\in Q\times \Gamma$ is called a \emph{head}.  A pPDA is
\emph{fully probabilistic} if for each head $(q,X)$ and action $a \in
\Sigma$ there is at most one distribution $d$ with $(q,X,a,d) \in \mathop{\btran{}}$.
A configuration of a pPDA is an element $(q,\beta)\in Q\times
\Gamma^*$,  and we sometimes write just $q\beta$
instead of $(q,\beta)$.  We write $qX \btran{a} d$ to denote $(q,X,a,d)
\in \mathop{\btran{}}$, that is, in a control state $q$ with $X$ at the
top of the stack the pPDA makes an $a$-transition to the distribution
$d$.  In a fully probabilistic pPDA we also write $q X \btran{a,x} r
\beta$ if $q X \btran{a} d$ and $d(r \beta) = x$.

A \emph{probabilistic basic process algebra (pBPA)} $\Delta$ is a pPDA with only one control state.
In this case we sometimes omit the control state from the representation of a configuration.
A \emph{probabilistic one-counter automaton (pOCA)} is a pPDA with a
stack alphabet containing only two symbols $X$ and $Z$, where the
transition function is restricted so that $Z$ always and only occurs
at the bottom of the stack.
A \emph{probabilistic visibly pushdown automaton (pvPDA)} is a pPDA with a partition of the actions
 $\Sigma = \Sigmar \cup \Sigmai \cup \Sigmac$
 such that for all $pX \btran{a} d$ we have:
  if $a \in \Sigmar$ then $\support{d} \subseteq Q \times \{\varepsilon\}$;
  if $a \in \Sigmai$ then $\support{d} \subseteq Q \times \Gamma$;
  if $a \in \Sigmac$ then $\support{d} \subseteq Q \times (\Gamma \times \Gamma)$.

A pPDA $\Delta = (Q,\Gamma,\Sigma,\btran{})$ generates a
pLTS~\mbox{$\mathcal{S}(\Delta)=(Q\times \Gamma^*, \Sigma, \tran{})$}
as follows.  For each $\beta \in \Gamma^*$ a rule $qX \btran{a} d$ of
$\Delta$ induces a transition $qX\beta \tran{a} d'$ in
$\mathcal{S}(\Delta)$, where $d' \in \mathcal{D}(Q \times \Gamma^*)$
is defined by $d'(p \alpha \beta) = d(p \alpha)$ for all $p\in Q$ and $\alpha \in \Gamma^*$.
Note that all configurations with the empty stack define terminating
states of $\mathcal{S}(\Delta)$.


The \emph{bisimilarity problem} asks whether two configurations
$q_1 \alpha_1$ and $q_2 \alpha_2$ of a given pPDA $\Delta$ are bisimilar when
regarded as states of the induced pLTS $\mathcal{S}(\Delta)$.

\begin{figure}
\centering
\begin{tikzpicture}[x=2.5cm,y=1.5cm,font=\scriptsize]
\foreach \x/\l in {0/pZ,1/pXZ,2/pXXZ,3/pXXXZ,4/pXXXXZ}
   \node (D\x)   at (\x,-1)   [ran] {$\l$};
\foreach \x/\xx in {1/0,2/1,3/2,4/3}
   \draw [tran] (D\x) to  node[below]   {$0.5$} (D\xx);
\foreach \x/\l in {2/qXXZ,3/qXXXZ,4/qXXXXZ}
   \node (H\x)   at (\x,0)   [ran] {$\l$};
\foreach \x/\xx in {1/2,2/3,3/4}
   \draw [tran,rounded corners] (D\x) -- node[very near start,above] {$0.5$} (H\xx);
\foreach \x/\xx in {2/3,3/4}
   \draw [tran,rounded corners] (H\x) -- node[near start,above] {$1$} (D\xx);
\draw [thick, dotted] (H4) -- +(.8,0);
\draw [thick, dotted] (D4) -- +(.8,0);

\foreach \x/\l in {0/r,1/rX,2/rYX,3/rXXX}
   \node (A\x)   at (\x,-2)   [ran] {$\l$};
   \node (Ba)   at (2,-2.5)   [ran] {$rYX'$};
   \node (Bb)   at (3,-3)   [ran] {$rXXX'$};

   \node (Aa)   at (4,-1.9)   [ran] {$rXX$};
   \node (Ab)   at (4,-2.3)   [ran] {$rYXXX$};
   \node (Ac)   at (4,-2.7)   [ran] {$rYX'XX$};

   \draw [tran,rounded corners] (A1) -- node[midway,below] {$0.5$} (A0);
   \draw [tran,rounded corners] (A1) -- node[midway,above] {$0.3$} (A2);
   \draw [tran,rounded corners] (A1) -- node[midway,below] {$0.2$} (Ba);
   \draw [tran,rounded corners] (A2) -- node[midway,above] {$1$} (A3);
   \draw [tran,rounded corners] (A3) -- node[near end,above] {$0.5$} (Aa.west);
   \draw [tran,rounded corners] (A3) -- node[near end,above] {$0.3$} (Ab.west);
   \draw [tran,rounded corners] (A3) -- node[near end,above] {$0.2$} (Ac.west);
   \draw [tran,rounded corners] (Ba) -- node[midway,above] {$1$} (Bb);
\draw [thick, dotted] (Aa) -- +(.8,0);
\draw [thick, dotted] (Ab) -- +(.8,0);
\draw [thick, dotted] (Ac) -- +(.8,0);

\draw [thick, dotted] (Bb) -- +(.8,-0.25);

\end{tikzpicture}
\caption{A fragment of $\mathcal{S}(\Delta)$ from Example~\ref{ex-preli-example}.}
\label{fig-preli-example}
\end{figure}
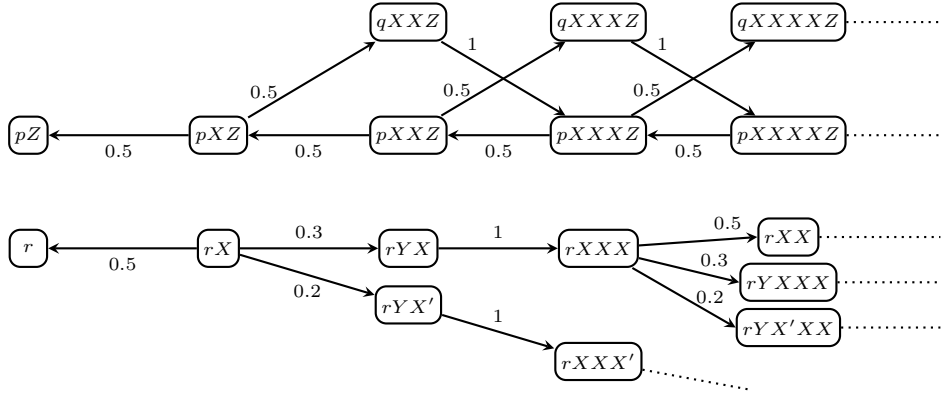

\begin{example} \label{ex-preli-example}
  Consider the fully probabilistic pPDA~$\Delta = (\{p,q,r\},\{X,X',Y,Z\},\{a\},\mathord{\btran{}})$ with the following rules
   (omitting the unique action~$a$):
  \begin{align*}
     pX &\btran{0.5} qXX, &&&
     pX &\btran{0.5} p, &
     qX &\btran{1} pXX, \\
     rX &\btran{0.3} rYX, &
     rX &\btran{0.2} rYX', &
     rX &\btran{0.5} r, &
     rY &\btran{1} rXX, \\
     rX' &\btran{0.4} rYX, &
     rX' &\btran{0.1} rYX', &
     rX' &\btran{0.5} r.
  \end{align*}
  The restriction of~$\Delta$ to the control states $p,q$ and to the stack symbols $X,Z$ yields a pOCA.
  The restriction of~$\Delta$ to the control state~$r$ and the stack symbols $X,X',Y$ yields a pBPA.
  A fragment of the pLTS $\mathcal{S}(\Delta)$ is shown in Figure~\ref{fig-preli-example}.
  The configurations $pXZ$ and~$rX$ are bisimilar,
   as there is a bisimulation relation with equivalence classes
    $\{pX^kZ\} \cup \{r w \mid w \in \{X,X'\}^k\}$ for all $k \ge 0$ and
    $\{qX^{k+1}Z\} \cup \{r Y w \mid w \in \{X,X'\}^k\}$ for all $k \ge 1$.
\end{example}

\newcommand{\stacksymb}[1]{\langle\underline{#1}\rangle}
\newcommand{\stacksymba}[1]{\langle #1 \rangle}
\section{From Probabilistic to Nondeterministic Bisimilarity}
\label{sec-prob-to-nondet}

A nondeterministic pushdown automaton (PDA) is a special case of a
probabilistic pushdown automaton in which the transition function
assigns only Dirac distributions. We give a novel reduction of the
bisimilarity problem for pPDA to the bisimilarity problem for
PDA. Because the latter is known to be decidable~\cite{Senizergues05},
we get decidability of the bisimilarity problem for pPDA.

As a first step we give the following characterisation of
$R$-equivalence of two distributions (defined earlier).

\begin{lemma}\label{lem:bisim-as-game}
Let $R$ be an equivalence relation on a set $S$.  Two distributions
$d,d'$ on $S$ are $R$-equivalent if and only if for all $A \subseteq S$ we have $d(A)
\leq d'(R(A))$, where $R(A)$ denotes the image of $A$ under $R$.
\end{lemma}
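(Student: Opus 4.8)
The plan is to reformulate $R$-equivalence in terms of the $R$-saturated (i.e.\ equivalence-class-closed) subsets of $S$, and then to handle the two directions of the biconditional separately, with the total probability mass playing the decisive role in the harder direction.

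First I would record the key structural fact about the image $R(A) = \{ s \in S : \exists a \in A,\ (a,s) \in R \}$: since $R$ is an equivalence relation, $R(A)$ is a union of $R$-equivalence classes (it is $R$-saturated), and moreover $A \subseteq R(A)$, while $R(E) = E$ for every equivalence class $E$. I would also note that $d$ and $d'$ being $R$-equivalent is precisely the statement that $d(C) = d'(C)$ for every $R$-saturated set $C$, where I write $d(C) := \sum_{s \in C} d(s)$; this follows from the definition simply by summing over the classes contained in $C$.

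For the forward direction, assuming $d$ and $d'$ are $R$-equivalent, I would take an arbitrary $A \subseteq S$ and combine monotonicity with saturation: $d(A) \le d(R(A)) = d'(R(A))$, where the inequality holds because $A \subseteq R(A)$ and the equality because $R(A)$ is $R$-saturated. This direction is routine.

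The converse is harder, and this is where the main obstacle lies: the hypothesis $d(A) \le d'(R(A))$ is asymmetric in $d$ and $d'$, whereas $R$-equivalence is symmetric, so a single instantiation cannot by itself yield an equality. My plan to overcome this is to exploit that $d$ and $d'$ are \emph{probability} distributions. Instantiating the hypothesis at $A = E$ for an equivalence class $E$ and using $R(E) = E$ gives $d(E) \le d'(E)$ for every class $E$. Summing over all equivalence classes yields $1 = \sum_E d(E) \le \sum_E d'(E) = 1$, so $\sum_E \bigl(d'(E) - d(E)\bigr) = 0$ with every summand nonnegative; hence $d(E) = d'(E)$ for each $E$, which is exactly $R$-equivalence. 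When $S$ is countably infinite this rearrangement is justified because all the series involved have nonnegative terms and sum to $1$.
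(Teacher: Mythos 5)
Your proof is correct and follows essentially the same route as the paper: the easy direction via $d(A) \le d(R(A)) = d'(R(A))$ using that $R(A)$ is a union of equivalence classes, and the converse by instantiating at $A = E$ to get $d(E) \le d'(E)$ and then using that both distributions have total mass $1$ to upgrade the inequalities to equalities. Your write-up merely makes the saturation and summation details more explicit than the paper's two-line argument.
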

\begin{proof}
For the \emph{if} direction we reason as follows.  For each
equivalence class $E$ we have $d(E) \leq d'(E)$.  But since $d$ and
$d'$ have total mass $1$ we must have $d(E)=d'(E)$ for all equivalence
classes $E$.

Conversely if $d$ and $d'$ are $R$-equivalent.  Then $d(A) \leq
d(R(A)) = d'(R(A))$ for any set $A$, since $R(A)$ is a countable union
of equivalence classes.
\end{proof}

We now give our reduction.  Let $\Delta = (Q,\Gamma,\Sigma,\btran{})$
be a pPDA and $q_1\gamma_1$, $q_2\gamma_2$ two configurations of
$\Delta$.  We define a new PDA $\Delta' =
(Q,\Gamma',\Sigma',\ctran{})$ that extends $\Delta$ with extra stack
symbols, input letters and transition rules.  In particular, a
configuration of $\Delta$ can also be regarded as a configuration of
$\Delta'$.  The definition of $\Delta'$ is such that two
$\Delta$-configurations $q_1 \gamma_1$ and $q_2\gamma_2$ are bisimilar
in $\Delta$ if and only if the same two configurations are bisimilar in
$\Delta'$.

Intuitively we eliminate probabilistic transitions by treating
probabilities as part of the input alphabet.  To this end, let $W
\subseteq \mathbb{Q}$ be the set of rational numbers of the form
$d(A)$ for some rule $ pX \btran{a} d$ in $\Delta$ and $A \subseteq
\mathit{support}(d)$.  Think of $W$ as the set of relevant transition
weights.

We define $\Delta'$ as follows.  Note that when defining rules of
$\Delta'$ we write just $q\gamma$ instead of the Dirac distribution
assigning $1$ to $q\gamma$.

\begin{itemize}
 \item The stack alphabet $\Gamma'$ contains all symbols from
   $\Gamma$. In addition, for every rule $pX \btran{a} d$ in $\Delta$
   it contains a new symbol $\stacksymba{d}$ and for every $T
   \subseteq \mathit{support}(d)$ a symbol $\stacksymba{T}$.

 \item
The input alphabet $\Sigma'$ is equal to $\Sigma\cup W\cup \{\#\}$
where $\#$ is a distinguished action not in $\Sigma$ or $W$.

 \item
The transition function $\ctran{}$ is defined as follows. For every
rule $qX \btran{a} d$, there is a rule $qX\ctran{a}q\stacksymba{d}$.
We also have a rule $q\stacksymba{d} \ctran{w} q\stacksymba{T}$ if $T
\subseteq \mathit{support}(d)$ and $d(T)\geq w \in W$.  Finally, we have a
rule $q\stacksymba{T} \ctran{\#} p\alpha$ if $p\alpha \in T$.
\end{itemize}

The PDA $\Delta'$ can be constructed in time exponential in the size
of $\Delta$, and in polynomial time if the branching degree of
$\Delta$ is bounded
(i.e. if we fix a number $N$ and consider only pPDAs with branching degree at most $N$). 
\iftechrep{See Appendix~\ref{app:analysis-complexity} for the analysis.}{The analysis can be found in~\cite{techreport}.}
The correctness of the
construction is captured by the following lemma and proved
\iftechrep{in Appendix~\ref{app:nondet-constr-correct}.}{in~\cite{techreport}.}

\begin{lemma}\label{lem:nondet-constr-correct}
For any configurations $q_1\gamma_1, q_2\gamma_2$ of $\Delta$ we have
$q_1\gamma_1 \sim q_2\gamma_2$ in $\Delta$ if and only if $q_1\gamma_1
\sim q_2\gamma_2$ in $\Delta'$.
\end{lemma}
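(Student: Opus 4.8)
The plan is to prove both directions of the equivalence by exhibiting, from a bisimulation in one system, a suitable bisimulation in the other. The crux of the construction is that the three-step gadget $qX \ctran{a} q\stacksymba{d} \ctran{w} q\stacksymba{T} \ctran{\#} p\alpha$ simulates a single probabilistic move $qX \btran{a} d$ by letting a ``Spoiler'' first commit to the action $a$, then announce a weight $w$ together with a set $T$ witnessing $d(T) \geq w$, and finally reveal a concrete successor $p\alpha \in T$. By Lemma~\ref{lem:bisim-as-game}, $R$-equivalence of two distributions $d,d'$ is exactly the condition that for every $A$ we have $d(A) \le d'(R(A))$, and this is precisely the property that the weight-and-set gadget is designed to test. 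So the reduction is faithful because the game played on $\Delta'$ over the intermediate stack symbols forces the players to verify the inequality characterisation of distribution equivalence.

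First I would handle the \emph{only if} direction. Assume $R$ is a bisimulation on $\mathcal{S}(\Delta)$ witnessing $q_1\gamma_1 \sim q_2\gamma_2$. I would extend $R$ to a relation $R'$ on the configurations of $\Delta'$ by keeping $R$ on the original $\Delta$-configurations, and additionally relating the intermediate configurations that arise in matched gadgets: I would put $q\stacksymba{d}\beta \mathrel{R'} q'\stacksymba{d'}\beta'$ whenever $q\beta \mathrel{R} q'\beta'$, the two moves $qX\btran{a}d$ and $q'X'\btran{a}d'$ carry the same action $a$, and the induced distributions $d,d'$ (after appending the tails $\beta,\beta'$) are $R$-equivalent; similarly I would relate $q\stacksymba{T}\beta \mathrel{R'} q'\stacksymba{T'}\beta'$ whenever $T' = R(T)$ (restricted appropriately to the relevant supports). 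Then I would check that $R'$ is a bisimulation on $\Delta'$, working through each of the three transition types. The interesting cases are the $w$-moves and the $\#$-moves: for a $w$-move $q\stacksymba{d}\beta \ctran{w} q\stacksymba{T}\beta$ with $d(T)\ge w$, Lemma~\ref{lem:bisim-as-game} applied to the $R$-equivalent distributions guarantees a set $T'$ with $R(T)\supseteq$ the right image and $d'(T') \ge w$, so the opponent can respond; for a $\#$-move revealing $p\alpha\in T$, the definition of $R'$ on the $\stacksymba{T}$-symbols together with $R$ being a bisimulation supplies a matching successor.

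Conversely, for the \emph{if} direction I would take a bisimulation $R'$ on $\Delta'$ with $q_1\gamma_1 \mathrel{R'} q_2\gamma_2$ and define $R$ on $\Delta$-configurations simply as the restriction of $R'$ to configurations over the original alphabet $\Gamma$. The task is to show $R$ is a bisimulation on $\mathcal{S}(\Delta)$, i.e., that a matched pair $q\beta \mathrel{R} q'\beta'$ together with a probabilistic move $q\beta \tran{a} d$ forces a move $q'\beta' \tran{a} d'$ with $d,d'$ being $R$-equivalent. Here I would run the gadget game: the $a$-move in $\Delta'$ forces a matching $a$-move, reaching related $\stacksymba{d}$- and $\stacksymba{d'}$-configurations; then for every $T$ and weight $w\le d(T)$ the $w$-move forces a response $\stacksymba{T'}$ with $d'(T')\ge w$, and the $\#$-moves ensure $R(T)$ covers $T'$. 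Quantifying over all $A=T$ and taking $w=d(A)$ yields $d(A)\le d'(R(A))$ for all $A$, so by Lemma~\ref{lem:bisim-as-game} the distributions $d,d'$ are $R$-equivalent.

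The main obstacle I expect is bookkeeping around the stack tails and the precise alignment of the auxiliary stack symbols: a rule of $\Delta'$ acts only on the top symbol, so when the induced pLTS appends a tail $\beta$ the intermediate symbols $\stacksymba{d},\stacksymba{T}$ sit on top of $\beta$, and I must make sure that the $R$-equivalence tested by the gadget is that of the \emph{induced} distributions $d'(p\alpha\beta)=d(p\alpha)$ rather than of the bare rule distributions. The correctness hinges on the fact that appending a common tail preserves $R$-equivalence and image-under-$R$ computations, which I would verify as a routine but necessary lemma; the rest is a careful but mechanical case analysis matching the two bisimulation conditions across the three gadget stages.
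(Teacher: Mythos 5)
Your proof is correct in substance, but it takes a genuinely different route from the paper's. The paper proves a sharper, stratified statement by induction on $n$: for configurations $qX\gamma$, $rY\delta$ of $\Delta$, one has $qX\gamma \sim_n rY\delta$ in $\Delta$ if and only if $qX\gamma \sim'_{3n} rY\delta$ in $\Delta'$; the lemma then follows by intersecting over all $n$, using the approximant characterisation $\mathord{\sim} = \bigcap_n \mathord{\sim_n}$ (which is where finite branching is used). You instead argue coinductively in both directions: you extend a bisimulation $R$ on $\Delta$ to an equivalence $R'$ on $\Delta'$ by saturating the intermediate $\stacksymba{d}$- and $\stacksymba{T}$-configurations, and you restrict a bisimulation on $\Delta'$ back to the original configurations, verifying the transfer condition by playing through the three-step gadget; in both directions you invoke Lemma~\ref{lem:bisim-as-game} at exactly the point where the paper does. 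What your approach buys: it is self-contained at the level of bisimulation relations and never needs the approximant characterisation. What the paper's approach buys: the explicit ``one $\Delta$-step equals three $\Delta'$-steps'' accounting, which is precisely the structural fact that Section~\ref{sec-upper-bounds-visibly} later exploits to obtain the EXPTIME bound for pvPDA, so the stratified formulation earns its keep elsewhere in the paper.

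One detail in your write-up needs repair. Your relation on the set-symbols, ``$q\stacksymba{T}\beta \mathrel{R'} q'\stacksymba{T'}\beta'$ whenever $T' = R(T)$ (restricted appropriately to the relevant supports)'', is neither symmetric nor transitive as literally stated, yet the paper's notion of bisimulation requires $R'$ to be an \emph{equivalence} relation; moreover a symbol $\stacksymba{T'}$ does not record which distribution's support it should be ``restricted'' to. The clean condition is that the two sets have the same $R$-saturation after the tails are appended, i.e.\ $R(\{p\alpha\beta \mid p\alpha \in T\}) = R(\{p'\alpha'\beta' \mid p'\alpha' \in T'\})$; this is manifestly an equivalence, and your case analysis (including the existence of a matching $T'$ with $d'(T') \ge w$, which follows from Lemma~\ref{lem:bisim-as-game} by taking $T'$ to be the support of $d'$ intersected with the saturation of $T\beta$) then goes through. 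Similarly, the side conditions you attach to the $\stacksymba{d}$-pairs (that the parent configurations be related and the rules carry the same action) should simply be dropped: they are unnecessary for the transfer argument and would make $R'$ depend on which rule produced the symbol rather than on the configurations themselves.
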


Let us show intuitively why bisimilar configurations in $\Delta$
remain bisimilar considered as configurations of $\Delta'$.  Every
computation step of $\Delta$ is simulated in three steps by
$\Delta'$. Let $q_1X_1\gamma_1$ and $q_2X_2\gamma_2$ be bisimilar
configurations of $\Delta$.  Then in $\Delta'$ a transition of
$q_1X_1\gamma_1$ to $q_1\stacksymba{d_1}\gamma_1$ can be matched by a
transition (under the same action) of $q_2X_2\gamma_2$ to $q_2\stacksymba{d_2}\gamma_2$ such
that the distributions $d_1$ and $d_2$ are $\sim$-equivalent (and
\emph{vice versa}).  In particular, by Lemma~\ref{lem:bisim-as-game},
for any set of configurations $T$ the set $T'$ obtained by saturating
$T$ under bisimilarity is such that $d_1(T) \leq d_2(T')$.  Let
$\bar{T}$ and $\bar{T}'$ respectively contain the elements of $T$ and
$T'$ from which the suffixes $\gamma_1$ and $\gamma_2$ are removed.
Then, as a second step of simulation of $\Delta$ by $\Delta'$, a transition of $q_1\stacksymba{d_1}\gamma_1$ to a state
$q_1\stacksymba{\bar{T}}\gamma_1$ with label $w\in W$ can be matched
by a transition of $\Delta'$ to $q_2\stacksymba{\bar{T}'}\gamma_2$ with the same
label (similarly any transition of $q_2\stacksymba{d_2}\gamma_2$ can
be matched by a transition of $q_1\stacksymba{d_1}\gamma_1$).
Finally, as $T$ and $T'$ contain elements from the same bisimilarity
equivalence classes, in the third step a $\#$-transition from
$q_1\stacksymba{\bar{T}}\gamma_1$ to some $q'_1\alpha_1\gamma_1$ can
be matched by a $\#$-transition of $q_2\stacksymba{\bar{T}'}\gamma_2$
to $q'_2\alpha_2\gamma_2$ such that $q'_1\alpha_1\gamma_1$ and
$q'_2\alpha_2\gamma_2$ are again bisimilar in $\Delta$ (and \emph{vice
  versa}).

The three steps are illustrated in Figure~\ref{fig:reduction}, where the successors
of the configurations $pXZ$ and $rX$ in the system $\mathcal{S}(\Delta')$ for the PDA $\Delta'$
constructed from the pPDA $\Delta$ from Example~\ref{ex-preli-example} are drawn.
\begin{figure}
\begin{center}
\begin{tikzpicture}[x=2.5cm,y=1.5cm,font=\scriptsize]
\tikzstyle{lbl}=[inner sep=1pt]
 \node (pXZ) at (0,0) [ran] {$pXZ$};
 \node (pdZ) at (0.9,0) [ran] {$p\stacksymba{\left[\hspace{-4pt}\begin{array}{c}qXX{\mapsto}0.5\\p{\mapsto}0.5\end{array}\hspace{-4pt}\right]}Z$};
 \node (pqXXpZ) at (2.9,0) [ran] {$p\stacksymba{\{qXX,p\}}Z$};
 \node (pqXXZ) at (2.9,0.6) [ran] {$p\stacksymba{\{qXX\}]}Z$};
 \node (ppZ) at (2.9,-0.6) [ran] {$p\stacksymba{\{p\}]}Z$};
 \node (pZ) at (5.1,-0.6) [ran] {$pZ$};
 \node (qXXZ) at (5.1,0.6) [ran] {$qXXZ$};

 \draw [tran,rounded corners] (pXZ) -- node[near start,above,lbl] {$a$} (pdZ);
 \draw [tran,rounded corners] (pdZ) to node[midway,above,lbl] {$W(1)$} (pqXXpZ);
 \draw [tran,rounded corners] (pdZ.15) -- node[midway, above,lbl] {$W(0.5)$} (pqXXZ);
 \draw [tran,rounded corners] (pdZ.-15) -- node[midway, below,lbl] {$W(0.5)$} (ppZ);
 \draw [tran,rounded corners] (ppZ) -- node[midway,above,lbl] {$\#$} (pZ);
 \draw [tran,rounded corners] (pqXXpZ.-5) -| node[near start,below,lbl] {$\#$} (pZ);
 \draw [tran,rounded corners] (pqXXZ) -- node[midway,below,lbl] {$\#$} (qXXZ);
 \draw [tran,rounded corners] (pqXXpZ.5) -| node[near start,above,lbl] {$\#$} (qXXZ);
 \draw [thick, dotted] (qXXZ) -- +(.3,0);

 \newcommand{\bpo}{-2.4}

 \node (rX) at (0,\bpo) [ran] {$rX$};

 \node (rd) at (0.9,\bpo) [ran] {$r\stacksymba{\left[\hspace{-4pt}\begin{array}{c}rYX{\mapsto}0.3\\rYX'{\mapsto}0.2\\r{\mapsto}0.5\end{array}\hspace{-4pt}\right]}$};

 \node (rrYXrYX'r) at (3.6,\bpo) [ran] {$r\stacksymba{\{rYX,rYX',r\}}$};
 \node (rrYXrYX') at (2.4,\bpo+0.4) [ran] {$r\stacksymba{\{rYX,rYX'\}}$};
 \node (rrYXr) at (2.4,\bpo-0.4) [ran] {$r\stacksymba{\{rYX,r\}}$};
 \node (rrYX'r) at (3.6,\bpo+0.8) [ran] {$r\stacksymba{\{rYX',r\}}$};
 \node (rrYX) at (3.6,\bpo-0.8) [ran] {$r\stacksymba{\{rYX\}}$};
 \node (rrYX') at (2.4,\bpo+1.2) [ran] {$r\stacksymba{\{rYX'\}}$};
 \node (rr) at (2.4,\bpo-1.2) [ran] {$r\stacksymba{\{r\}}$};

 \node (r) at (5.1,\bpo-1.2) [ran] {$r$};
 \node (rYX) at (5.1,\bpo) [ran] {$rYX$};
 \node (rYX') at (5.1,\bpo+1.2) [ran] {$rYX'$};

 \draw [tran,rounded corners] (rX) -- node[near start,above,lbl] {$a$} (rd);

 \draw [tran,rounded corners] (rd.-30) |- node[pos=0.6,above,lbl] {$W(0.3)$} (rrYX);
 \draw [tran,rounded corners] (rd.40) |- node[pos=0.25,left,lbl] {$W(0.2)$} (rrYX');
 \draw [tran,rounded corners] (rd.-40) |- node[pos=0.25,left,lbl] {$W(0.5)$} (rr);
 \draw [tran,rounded corners] (rd) -- node[pos=0.15,above,lbl] {$W(1)$} (rrYXrYX'r);
 \draw [tran,rounded corners] (rd.18) -- node[midway,above,lbl] {$W(0.5)$} (rrYXrYX');
 \draw [tran,rounded corners] (rd.-18) -- node[midway,above,lbl] {$W(0.8)$} (rrYXr);
 \draw [tran,rounded corners] (rd.30) |- node[pos=0.6,above,lbl] {$W(0.7)$} (rrYX'r);

 \draw [tran,rounded corners] (rrYX'.10) -- node[pos=0.1,below,lbl] {$\#$} (rYX'.165);
 \draw [tran,rounded corners] (rrYX) -- node[pos=0.1,below,lbl] {$\#$} +(1.3,0) -- (rYX.south west);
 \draw [tran,rounded corners] (rr.-15) -- node[pos=0.1,above,lbl] {$\#$} (r.-150);
 \draw [tran,rounded corners] (rrYXrYX'r.5) -- node[pos=0.7,above,lbl] {$\#$} +(0.5,0) -- (rYX'.south west);
 \draw [tran,rounded corners] (rrYXrYX'r) -- node[pos=0.6,below,lbl] {$\#$} (rYX);
 \draw [tran,rounded corners] (rrYXrYX'r.-5) -- node[pos=0.8,below,lbl] {$\#$} +(0.5,-1) -- (r.175);
 \draw [tran,rounded corners] (rrYX'r.-5) -- node[pos=0.13,above,lbl] {$\#$} +(0.75,-1.5) -- (r);
 \draw [tran,rounded corners] (rrYX'r) -- node[pos=0.1,above,lbl] {$\#$} (rYX');
 \draw [tran,rounded corners] (rrYXr.0) -- node[pos=0.03,above,lbl] {$\#$} +(2,0) -- (rYX);
 \draw [tran,rounded corners] (rrYXr.-6) -| node[pos=0.03,below,lbl] {$\#$} (r);
 \draw [tran,rounded corners] (rrYXrYX'.0) -- node[pos=0.03,below,lbl] {$\#$} +(2,0) -- (rYX);
 \draw [tran,rounded corners] (rrYXrYX'.6) -| node[pos=0.03,above,lbl] {$\#$} (rYX');

 \draw [thick, dotted] (rYX) -- +(.3,0);
 \draw [thick, dotted] (rYX') -- +(.3,0);

 \draw [dashed,->] (-0.1,1) -- node[midway,above,align=center] {pick/match an action\\ and a distribution} (1,1);
 \draw [dashed,->] (1.2,1) -- node[midway,above,align=center] {pick/match a subset} (2.2,1);
 \draw [dashed,->] (2.5,1) -- node[midway,above,align=center] {pick/match an element} (4.5,1);

\end{tikzpicture}
\end{center}
\caption{An example of the construction for Lemma~\ref{lem:nondet-constr-correct}. Here,
an arrow labelled $W(x)$ is an abbreviation for multiple transitions
labelled by all multiples of $0.1$ between $0.1$ and $x$.
\label{fig:reduction}}
\end{figure}
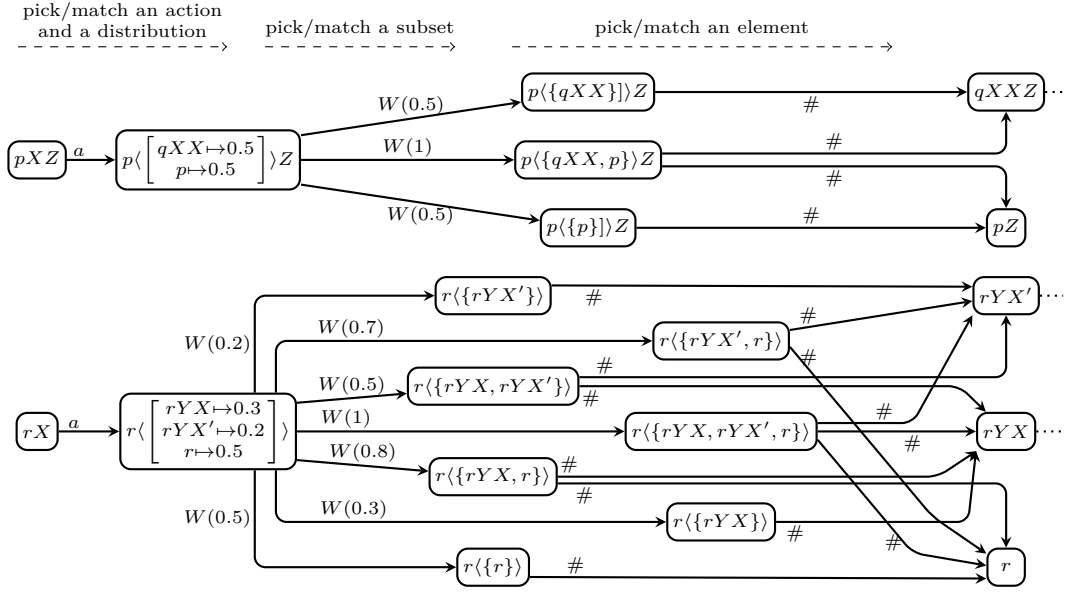

Lemma~\ref{lem:nondet-constr-correct} gives rise to the following theorem.
\begin{theorem}\label{thm:prob-to-nondet}
For any pPDA
$\Delta$ there is a PDA $\Delta'$ constructible in exponential time such that
for any configurations $q_1\gamma_1, q_2\gamma_2$ of $\Delta$ we have
$q_1\gamma_1 \sim q_2\gamma_2$ in $\Delta$ if and only if $q_1\gamma_1
\sim q_2\gamma_2$ in $\Delta'$.
In addition, if $\Delta$ is a pBPA, then
$\Delta'$ is a BPA.
\end{theorem}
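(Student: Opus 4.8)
The plan is to read off the theorem from the construction of $\Delta'$ already given, together with Lemma~\ref{lem:nondet-constr-correct}. The equivalence ``$q_1\gamma_1 \sim q_2\gamma_2$ in $\Delta$ if and only if $q_1\gamma_1 \sim q_2\gamma_2$ in $\Delta'$'' is verbatim the statement of Lemma~\ref{lem:nondet-constr-correct}, so for correctness I would simply invoke that lemma; all the real work sits there. What remains is to verify the complexity bound and the additional claim about pBPA.

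For the complexity bound I would count the objects the construction adds. For each rule $pX \btran{a} d$ of $\Delta$, the new stack alphabet $\Gamma'$ acquires one symbol $\stacksymba{d}$ and one symbol $\stacksymba{T}$ for every $T \subseteq \support{d}$; since $|\support{d}|$ is at most the branching degree of $\Delta$, the number of such subsets, and hence the number of new stack symbols together with the associated $w$- and $\#$-transitions, is at most exponential in the size of $\Delta$. The input alphabet grows only by the finite weight set $W$ and the single symbol $\#$, and $|W|$ is polynomial in the size of $\Delta$. Writing down all rules of $\Delta'$ can therefore be carried out in exponential time, as already observed in the discussion preceding the theorem.

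For the final claim I would observe that the construction never introduces a fresh control state. Each of the three rule schemes produces a configuration whose control state already lies in $Q$: the rules $qX \ctran{a} q\stacksymba{d}$ and $q\stacksymba{d} \ctran{w} q\stacksymba{T}$ keep the control state $q$ fixed, and the rule $q\stacksymba{T} \ctran{\#} p\alpha$ moves to a state $p$ with $p\alpha \in \support{d} \subseteq Q \times \Gamma^{\le 2}$, whence $p \in Q$. Thus the set of control states of $\Delta'$ is exactly $Q$. In particular, if $\Delta$ is a pBPA then $|Q| = 1$, so $\Delta'$ also has a single control state and is therefore a BPA.

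Since the substance of the argument is carried entirely by Lemma~\ref{lem:nondet-constr-correct}, I do not expect a genuine obstacle at the level of the theorem itself. The only point needing a little care is the counting in the complexity bound, where one must note that it is the subsets $T \subseteq \support{d}$---exponentially many in the branching degree---rather than the weights in $W$ that dominate the size of $\Delta'$, so the bound is exponential and drops to polynomial exactly when the branching degree is fixed.
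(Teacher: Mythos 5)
Your proposal is correct and takes essentially the same route as the paper: the bisimilarity equivalence is exactly Lemma~\ref{lem:nondet-constr-correct}, the size count mirrors the paper's analysis (exponentially many subsets $T \subseteq \support{d}$ dominating, polynomial when the branching degree is fixed), and the pBPA claim follows, as you say, because the construction adds stack symbols and actions but never new control states. One small slip: $|W|$ is \emph{not} polynomial in the size of $\Delta$ in general, since $W$ consists of the subset sums $d(A)$ and the paper bounds it only by $|\varrho| \cdot 2^m$ (e.g., probabilities $1/2, 1/4, \ldots$ give exponentially many distinct sums); this does not affect your conclusion, because the subsets $T$ already force an exponential count and the overall bound remains exponential.
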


Using Theorem~\ref{thm:prob-to-nondet} and \cite{Senizergues05,Burkart00}, we get the following corollary.

\begin{corollary}
The bisimilarity problem for pPDA is decidable, and the bisimilarity
problem for pBPA is decidable in triply exponential time.
\end{corollary}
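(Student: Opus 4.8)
The plan is to obtain both assertions as immediate consequences of Theorem~\ref{thm:prob-to-nondet} together with the known decidability and complexity results for the non-probabilistic setting. For decidability of pPDA bisimilarity, I would start from the given pPDA $\Delta$ and configurations $q_1\gamma_1, q_2\gamma_2$, apply Theorem~\ref{thm:prob-to-nondet} to compute the PDA $\Delta'$, and use that $q_1\gamma_1 \sim q_2\gamma_2$ holds in $\Delta$ if and only if it holds in $\Delta'$. Since this reduction is effective and bisimilarity of PDA is decidable by \cite{Senizergues05}, decidability for pPDA follows at once: perform the reduction, then run the decision procedure of \cite{Senizergues05} on $\Delta'$.

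For the pBPA bound I would invoke the second part of Theorem~\ref{thm:prob-to-nondet}: when $\Delta$ is a pBPA, the constructed $\Delta'$ is a BPA, produced in time, and hence of size, at most exponential in $|\Delta|$. I would then apply the doubly exponential decision procedure for BPA bisimilarity from \cite{Burkart00} (see also \cite{Jancar12}) to $\Delta'$. The total running time is that of this procedure on an input whose size is exponential in $|\Delta|$.

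The only delicate point, and the step I expect to require the most care, is checking that this composition lands in triply exponential time rather than something larger. Writing the BPA procedure's running time as $2^{2^{p(m)}}$ for a polynomial $p$ in the input size $m$, and substituting the exponential size $m = 2^{q(n)}$ of $\Delta'$ (with $n = |\Delta|$ and $q$ polynomial), I would use the fact that $p$ is polynomial to bound $p(2^{q(n)}) = 2^{O(q(n))}$. Hence $2^{p(m)}$ is doubly exponential in $n$ and $2^{2^{p(m)}}$ is triply exponential in $n$, which yields the claimed 3EXPTIME upper bound. No lower bound is asserted in the corollary; the EXPTIME-hardness noted earlier in the introduction, inherited from the non-probabilistic result of \cite{Kiefer12}, concerns a separate direction and is not needed here.
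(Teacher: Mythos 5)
Your proposal is correct and matches the paper's own argument exactly: the paper derives this corollary directly from Theorem~\ref{thm:prob-to-nondet} combined with the decidability result of~\cite{Senizergues05} for PDA and the doubly exponential BPA procedure of~\cite{Burkart00}. Your explicit check that the composition of an exponential-size reduction with a doubly exponential procedure stays within 3EXPTIME is the (routine) detail the paper leaves implicit.
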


\section{Upper Bounds} \label{sec-upper-bounds}

\subsection{Bisimilarity
of pOCA is in PSPACE}\label{sec:bispOCAinPSPACE}

The bisimilarity problem for (non-probabilistic)
one-counter automata is  PSPACE-complete, as shown
in~\cite{BGJ:Concur10}. It turns out that for pOCA we get
PSPACE-completeness as well.
The lower bound is shown in Section~\ref{sec-lower-bounds}; here
we show:

\begin{theorem} \label{thm-bisim-pOCA-inPspace}
The bisimilarity problem for pOCA is in PSPACE, even
if we present the instance
 $\Delta=(Q,\{Z,X\},\Sigma,\mathord{\btran{}})$,
$p X^mZ, q X^nZ$ (for which we ask if $p X^mZ\sim q X^nZ$)
by a shorthand using $m,n$ written in binary.
\end{theorem}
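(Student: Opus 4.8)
The plan is to work directly with the relation $\sim$ on pOCA configurations, adapting the geometric ``belt'' analysis of the non-probabilistic case \cite{BGJ:Concur10}; the generic reduction of Theorem~\ref{thm:prob-to-nondet} is of no help here, since it blows up the branching degree exponentially and so cannot yield a PSPACE procedure. I identify a configuration $pX^mZ$ with the pair $(p,m) \in Q \times \N$, so that $\sim$ becomes a relation on $Q \times \N$, and for a fixed pair of control states $(p,q)$ I regard its restriction as a subset $B_{p,q} \subseteq \N \times \N$. The crucial structural feature I will exploit throughout is \emph{locality}: a rule on $X$ changes the counter by $-1$, $0$, or $+1$, so every transition $pX^mZ \tran{a} d$ has $d$ supported on at most $|Q|$ configurations whose counters lie in $\{m-1,m,m+1\}$. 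Consequently the bisimulation condition relating $(p,m)$ and $(q,n)$ refers only to the $\sim$-classes of configurations with counters in a window of radius one around $m$ and around $n$.

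First I would establish a belt/periodicity theorem for pOCA analogous to \cite{BGJ:Concur10}: each set $B_{p,q}$ is a finite union of a ``background'' part, on which membership depends only on the pair $(p,q)$, together with finitely many ``belts'', each a region of bounded width around a line $n = \tfrac{a}{b}\,m + c$ of rational slope, inside which membership is periodic. The point is that the number of belts, the slopes $\tfrac{a}{b}$, the widths, the offsets $c$, and the periods are all bounded by a polynomial in $|\Delta|$, with every numerical parameter of polynomial bit-length. This is proved by the same pumping argument as in the non-probabilistic case, driven by the inductive approximants $\sim_n$: two configurations with large and suitably related counters are separated, if at all, within a number of rounds of polynomial bit-length, and the counters at which a first separation occurs trace out lines of bounded slope.

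The new ingredient is that the bisimulation condition now matches \emph{distributions} rather than single successors. For a transition $pX^mZ \tran{a} d$ the distribution $d$ lives on the small window described above, and by the characterisation of $R$-equivalence (Lemma~\ref{lem:bisim-as-game}) the requirement that $d$ and a candidate $d'$ be $\sim$-equivalent amounts to the finitely many linear equalities $\sum_{s\in E}d(s)=\sum_{s\in E}d'(s)$, one per $\sim$-class $E$ meeting the two windows. Given the belt description, the $\sim$-class of each successor configuration in these windows is determined by $(p,q)$ and by the residues of $m$ and $n$ modulo the relevant periods; hence, for each of the finitely many ``types'' of pair, the distribution-matching condition — including the universal/existential alternation arising from nondeterministic branching — is a \emph{fixed} finite test. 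Verifying that a candidate belt description is a bisimulation therefore reduces to polynomially many such local tests, each checkable in polynomial space.

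The PSPACE algorithm then guesses the belt description (a polynomial-size, polynomial-bit-length object), verifies by the local tests above that it describes the bisimilarity relation $\sim$ itself — namely that the relation it encodes is a bisimulation and that every pair it excludes is already separated by some approximant $\sim_k$ with $k$ of polynomial bit-length — locates the query pair $(m,n)$, supplied in binary and hence of polynomial bit-length, within the description by computing which belt (if any) contains it together with the residues of $m$ and $n$ modulo the periods, and reads off whether $pX^mZ \sim qX^nZ$. Since $\mathrm{NPSPACE} = \mathrm{PSPACE}$, this guess-and-verify procedure runs in polynomial space. I expect the main obstacle to be the structural theorem of the second paragraph: proving that the periodic belt structure survives the passage to distribution-matching, i.e., that the more global probabilistic condition neither enlarges the belt parameters beyond polynomial bit-length nor destroys periodicity. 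Locality (the $\pm 1$ counter moves) is what confines the condition and makes the non-probabilistic pumping argument go through, but checking carefully that the linear equalities among class-masses respect the periodic structure is the technically delicate step.
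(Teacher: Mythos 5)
Your overall direction (adapting the belt analysis of~\cite{BGJ:Concur10}, exploiting the $\pm 1$ locality of counter moves, and reducing distribution matching to class-mass equalities via Lemma~\ref{lem:bisim-as-game}) is the same as the paper's, but your algorithmic architecture rests on a structural claim that is not available and, as stated, contradicts what is known: you assume the colouring of $\sim$ admits a \emph{polynomial-size} description (belts plus periodic interiors) that can be guessed outright and then verified. The polynomial bounds one can actually prove cover only the number of belts (at most $k^4$ for $k=|Q|$), their slopes (ratios of integers at most $k^2$) and their thickness; the \emph{periods} are exponential in magnitude --- the paper takes $\psi = k!$, a common multiple of the counter-effects of all simple cycles, which governs reachability of the set $\INC$. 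Even granting that such a period has polynomial bit-length, the colouring of $\sim$ inside one belt over a single period is an exponential-size object, so ``guess the belt description and verify it'' is not a PSPACE procedure. This is exactly why the paper does not guess-and-verify a global description of $\sim$: its nondeterministic algorithm sweeps a vertical window of width~$3$ across the grid, guessing the colouring one column at a time (only polynomially many points per column lie in the initial/belt space; the background values are \emph{computed}, not guessed, from $\distINC$ and from $\sim_k$ on the underlying finite pLTS $\F_\Delta$), checks a \emph{local consistency} condition in the middle column, and runs for exponentially many steps while using only polynomial space. Periodicity and the pigeonhole principle enter only in the correctness argument --- a long enough successful run implies the colouring could be continued forever --- never as a guessed certificate. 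The genuinely probabilistic content is then isolated in the consistency notion of Proposition~\ref{prop:localconsistency}, which is the paper's main change relative to~\cite{BGJ:Concur10}.

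Two further steps of your plan do not go through. First, your background is claimed to have membership depending only on $(p,q)$; in fact it depends on $(m,n)$ (periodically, with exponential period) through whether $\INC$ is reachable from $pX^mZ$ and from $qX^nZ$: the paper colours a background point $0$ when the two $\distINC$ values differ, and otherwise (both infinite) by whether $pX^mZ \sim_k qX^nZ$. Second, your verification step requires certifying that every pair excluded by the guessed description is separated by some approximant $\sim_j$ with $j$ of polynomial bit-length; separation indices can be exponential in magnitude (already because $m,n$ are given in binary), and deciding $\sim_j$ for such $j$ is not easier than the original problem, so this check is circular. The paper needs no such two-sided verification: by Proposition~\ref{prop:localconsistency}, consistency of the guessed colouring $\chi$ already yields $R_{\chi} \subseteq \mathord{\sim}$ (soundness), and completeness holds because $\sim$ itself is consistent, so a run guessing $\chi_{\sim}$ always succeeds.
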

The reduction underlying
Theorem~\ref{thm:prob-to-nondet} would only provide an
exponential-space upper bound, so
we give a pOCA-specific polynomial-space algorithm.
%
%
%
%
In fact, we adapt the algorithm from~\cite{BGJ:Concur10};
the principles are the same but some
ingredients have to be slightly modified.
The following text is meant to give the idea in a self-contained
manner, though at a more abstract
level than in~\cite{BGJ:Concur10}.
The main difference is in the notion of local consistency, discussed
around Proposition~\ref{prop:localconsistency}.

Similarly as~\cite{BGJ:Concur10},
we use a geometrical presentation of relations on
the set of configurations (Fig.~\ref{fig:belts} reflects such a
presentation).
A relation can be identified with
a $1$/$0$ (or YES/NO)
colouring of the ``grid''  $\N\times\N\times (Q\times Q)$:

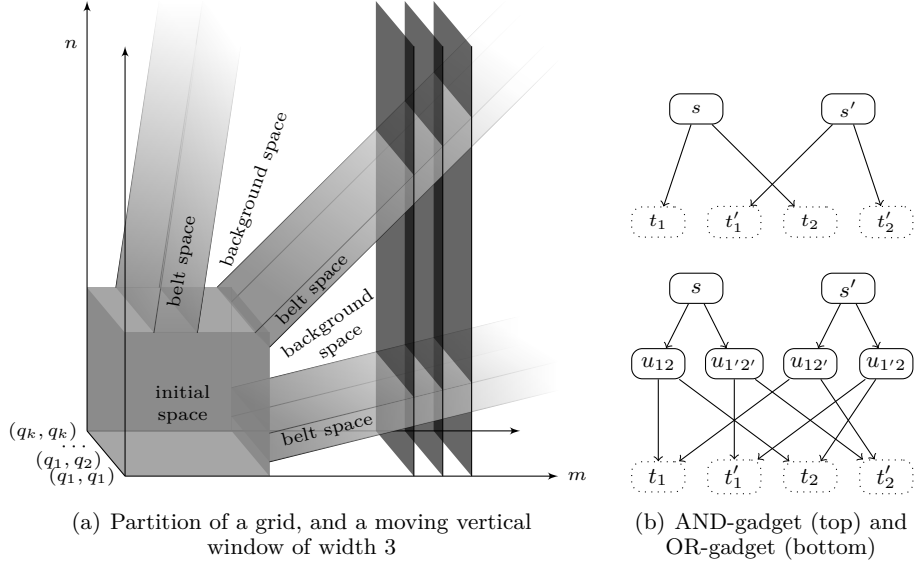
\begin{figure}[t]
\subfigure[Partition of a grid, and a moving vertical window~of~width~3\label{fig:belts}]{
\begin{tikzpicture}[x = {(1.9cm,0cm)}, y = {(0cm,1.9cm)}, z = {(-0.5cm,0.6cm)},font=\scriptsize]
 \path [name path=w1](2.0,0,0)--(2.0,3,0);
 \path [name path=w2](2.0,0,1)--(2.0,3,1);
 \path [name path=ww1](2.2,0,0)--(2.2,3,0);
 \path [name path=ww2](2.2,0,1)--(2.2,3,1);
 \path [name path=www1](2.4,0,0)--(2.4,3,0);
 \path [name path=www2](2.4,0,1)--(2.4,3,1);
 \path [name path=r1](1,0.3,0) -- (3,0.8,0);
 \path [name path=r2](1,0.3,1) -- (3,0.8,1);
 \path [name path=u1](0.9,1,0) -- (2.9,3,0);
 \path [name path=u2](0.9,1,1) -- (2.9,3,1);
 \path [name intersections={of= w1 and r1, name=A}];
 \path [name intersections={of= w2 and r2, name=B}];
 \path [name intersections={of= w1 and u1, name=C}];
 \path [name intersections={of= w2 and u2, name=D}];
 \path [name intersections={of= ww1 and r1, name=Aw}];
 \path [name intersections={of= ww2 and r2, name=Bw}];
 \path [name intersections={of= ww1 and u1, name=Cw}];
 \path [name intersections={of= ww2 and u2, name=Dw}];
 \path [name intersections={of= www1 and r1, name=Aww}];
 \path [name intersections={of= www2 and r2, name=Bww}];
 \path [name intersections={of= www1 and u1, name=Cww}];
 \path [name intersections={of= www2 and u2, name=Dww}];

 \draw[-latex'] (0,0,1) -- (3,0,1);

 \draw[path fading=north] (1,0.1,1) -- (3,0.6,1);
 \draw[path fading=north] (1,0.3,1) -- (3,0.8,1);
 \shade[opacity=0.6, shading angle=90] (1,0.1,1) -- (1,0.3,1) -- (3,0.8,1) -- (3,0.6,1) -- cycle;
 \shade[opacity=0.6, shading angle=120] (1,0.1,0) -- (3,0.6,0) -- (3,0.6,1) -- (1,0.1,1) -- cycle;

 \draw[path fading=east] (0.9,1,1) -- (2.9,3,1);
 \draw[path fading=east] (1,0.9,1) -- (3,2.9,1);
 \shade[opacity=0.6, shading angle=125] (0.9,1,1) -- (1,0.9,1) -- (3,2.9,1) -- (2.9,3,1) -- cycle;
 \shade[opacity=0.6, shading angle=145] (1,0.9,0) -- (3,2.9,0) -- (3,2.9,1) -- (1,0.9,1) -- cycle;

 \path[fill=gray,opacity=0.7] (0,1,0) rectangle (1,0,0);
 \path[fill=gray,opacity=0.7] (0,1,1) rectangle (1,0,1);
 \path[fill=gray,opacity=0.7] (1,1,1) -- (1,1,0) -- (1,0,0) -- (1,0,1) -- cycle;
 \path[fill=gray,opacity=0.7] (0,1,1) -- (0,1,0) -- (0,0,0) -- (0,0,1) -- cycle;
 \node at (0.4,0.5,0) {\parbox{1.5cm}{\centering initial \\ space}};

 \draw[path fading=north] (0.2,1,0) -- (0.5,3,0);
 \draw[path fading=north] (0.2,1,1) -- (0.5,3,1);
 \draw[path fading=north] (0.5,1,0) -- (0.8,3,0);
 \draw[path fading=north] (0.5,1,1) -- (0.8,3,1);
 \shade[opacity=0.6, shading angle=160] (0.2,1,0) -- (0.5,3,0) -- (0.5,3,1) -- (0.2,1,1) -- cycle;
 \shade[opacity=0.6, shading angle=180] (0.2,1,0) -- (0.5,1,0) -- (0.8,3,0) -- (0.5,3,0) -- cycle;
 \shade[opacity=0.6, shading angle=160] (0.5,1,0) -- (0.8,3,0) -- (0.8,3,1) -- (0.5,1,1) -- cycle;
 \shade[opacity=0.6, shading angle=180] (0.2,1,1) -- (0.5,1,1) -- (0.8,3,1) -- (0.5,3,1) -- cycle;
 \node[rotate=80] at (0.4,1.5,0) {belt space};

 \path[opacity=0.6,fill] (2.4,0,0) -- (Aww-1) -- (Bww-1) -- (2.4,0,1) -- cycle;
 \path[opacity=0.6,fill] (2.2,0,0) -- (Aw-1) -- (Bw-1) -- (2.2,0,1) -- cycle;
 \path[opacity=0.6,fill] (2,0,0) -- (A-1) -- (B-1) -- (2,0,1) -- cycle;

 \draw[path fading=north] (1,0.1,0) -- (3,0.6,0);
 \draw[path fading=north] (1,0.3,0) -- (3,0.8,0);
 \shade[opacity=0.6, shading angle=120] (1,0.3,0) -- (3,0.8,0) -- (3,0.8,1) -- (1,0.3,1) -- cycle;
 \path[opacity=0.6,fill] (Aww-1) -- (Cww-1) -- (Dww-1) -- (Bww-1) -- cycle;
 \path[opacity=0.6,fill] (Aw-1) -- (Cw-1) -- (Dw-1) -- (Bw-1) -- cycle;
 \path[opacity=0.6,fill] (A-1) -- (C-1) -- (D-1) -- (B-1) -- cycle;
 \shade[opacity=0.6, shading angle=90] (1,0.1,0) -- (1,0.3,0) -- (3,0.8,0) -- (3,0.6,0) -- cycle;
 \node[rotate=14] at (1.4,0.32,0) {belt space};

 \draw[path fading=east] (0.9,1,0) -- (2.9,3,0);
 \draw[path fading=east] (1,0.9,0) -- (3,2.9,0);
 \shade[opacity=0.6, shading angle=145] (0.9,1,0) -- (2.9,3,0) -- (2.9,3,1) -- (0.9,1,1) -- cycle;
 \path[opacity=0.6,fill] (2.4,3,0) -- (Cww-1) -- (Dww-1) -- (2.4,3,1) -- cycle;
 \path[opacity=0.6,fill] (2.2,3,0) -- (Cw-1) -- (Dw-1) -- (2.2,3,1) -- cycle;
 \path[opacity=0.6,fill] (2,3,0) -- (C-1) -- (D-1) -- (2,3,1) -- cycle;
 \shade[opacity=0.6, shading angle=145] (0.9,1,0) -- (1,0.9,0) -- (3,2.9,0) -- (2.9,3,0) -- cycle;
 \node[rotate=45] at (1.3,1.3,0) {belt space};

 \draw[-latex'] (0,0,1) -- (0,3,1) node[pos=0.9,left] {$n$};
 \draw[-latex'] (0,0,0) -- (3,0,0) node[pos=1,right] {$m$};
 \draw[-latex'] (0,0,0) -- (0,3,0);
 \draw (0,0,0) -- (0,0,1) node[pos=0.0,left] {$(q_1,q_1)$}
  node[pos=0.35,left] {$(q_1,q_2)$}
  node[pos=0.65,left] {$\cdots$}
  node[pos=1,left] {$(q_k,q_k)$};
 \node[rotate=70] at (0.9,2,0) {background space};
 \node[rotate=35] at (1.45,1,0) {\parbox{1.5cm}{\centering background\\ space}};
 \draw (2,0,0) -- (2,3,0);
 \draw (2.2,0,0) -- (2.2,3,0);
 \draw (2.4,0,0) -- (2.4,3,0);
\end{tikzpicture}
}
\subfigure[AND-gadget (top) and OR-gadget (bottom)\label{fig:gadgets}]{
\parbox[b]{3.8cm}{
\begin{tikzpicture}[font=\scriptsize]
\tikzstyle{every node} = [inner sep=2pt];
\tikzstyle{mystate} = [minimum height=4mm, minimum width=7mm,rounded corners];

%

\node (aa) at (0.5,1.5) [mystate,draw] {$s$};
\node (bb) at (2.5,1.5) [mystate,draw] {$s'$};
\node (cc) at (0,0) [mystate,draw,dotted] {$t_1$};
\node (dd) at (1,0) [mystate,draw,dotted] {$t_1'$};
\node (ee) at (2,0) [mystate,draw,dotted] {$t_2$};
\node (ff) at (3,0) [mystate,draw,dotted] {$t_2'$};

\draw[->] (aa)--(cc);
\draw[->] (bb)--(dd);
\draw[->] (aa)--(ee);
\draw[->] (bb)--(ff);
\end{tikzpicture}

%
%

\bigskip
\noindent
\begin{tikzpicture}[font=\footnotesize]
\tikzstyle{every node} = [inner sep=2pt];
\tikzstyle{mystate} = [minimum height=4mm, minimum width=7mm,rounded corners];
%

\node (a)  at (0.5,3) [mystate,draw] {$s$};
\node (b) at (2.5,3) [mystate,draw] {$s'$};
\node (c) at  (0,2) [mystate,draw] {$u_{12}$};
\node (d) at  (1,2) [mystate,draw] {$u_{1'2'}$};
\node (e) at  (2,2) [mystate,draw] {$u_{12'}$};
\node (f) at  (3,2) [mystate,draw] {$u_{1'2}$};
\node (g) at  (0,0.5) [mystate,draw,dotted] {$t_1$};
\node (h) at  (1,0.5) [mystate,draw,dotted] {$t_1'$};
\node (i) at  (2,0.5) [mystate,draw,dotted] {$t_2$};
\node (j) at  (3,0.5) [mystate,draw,dotted] {$t_2'$};
\draw[->]  (a) -- (c);
\draw[->]  (a) -- (d);
\draw[->]  (b) -- (e);
\draw[->]  (b) -- (f);
\draw[->]  (c) -- (g);
\draw[->]  (d) -- (h);
\draw[->]  (e) -- (j);
\draw[->]  (f) -- (i);
\draw[->]  (e) -- (g);
\draw[->]  (f) -- (h);
\draw[->]  (c) -- (i);
\draw[->]  (d) -- (j);
\end{tikzpicture}

%
%
%
%
%
}}
\caption{Figures for Section \ref{sec:bispOCAinPSPACE} (left) and \ref{sec-lower-bounds} (right)}
\end{figure}

\begin{definition}
For a
relation $R$ on $Q\times (\{X\}^*Z)$,
by the  \emph{(characteristic) colouring} $\chi_R$ we mean
the function
$\chi_R:\N\times\N\times (Q\times Q)\rightarrow \{1,0\}$
where $\chi_R(m,n,(p,q))=1$ if and only if $(pX^mZ,qX^nZ)\in R$.
Given (a colouring)
$\chi:\N\times\N\times (Q\times Q)\rightarrow\{1,0\}$, by
$R_{\chi}$ we denote the relation
$R_{\chi}=\{(pX^mZ,qX^nZ)\mid \chi(m,n,(p,q))=1\}$.
\end{definition}
The algorithm uses the fact that $\chi_{\sim}$ is ``regular'', i.e.
$\{(m,n,(p,q))\mid p X^mZ\sim qX^nZ\}$ is a (special) semilinear set.
More concretely, there are polynomials $pol_1, pol_2: \N\rightarrow\N$
(independent of the pOCA~$\Delta$) such that the following partition of the grid
$\N\times\N\times (Q\times Q)$
(sketched in Fig.~\ref{fig:belts})
has an important property specified later.
If $Q=\{q_1,q_2,\dots,q_k\}$, hence $|Q|=k$, then the grid is
partitioned into three parts: the \emph{initial-space}, i.e.
$\{(m,n,(p,q))\mid m,n\leq pol_2(k)\}$, the \emph{belt-space},
which is given by at most $k^4$ linear belts, with the slopes $\frac{c}{d}$ where
$c,d\in\{1,2,\dots,k^2\}$ and with the (vertical) thickness bounded by
$pol_1(k)$, and the rest, called the \emph{background}.
Moreover, $pol_2(k)$ is sufficiently large w.r.t. $pol_1(k)$, so that
 the belts are separated by the background
 outside
the initial space.

The mentioned important property is that
there is a period $\psi$, given by an exponential function of $k$,
such that
if two points $(m,n,(p,q))$ and $(m+i\psi,n+j\psi,(p,q))$
(for $i,j\in\N$)
are both in the background, for both $m,n$ larger then a polynomial bound, then
$\chi_{\sim}$ has the same value for both these points; in other words,
$\chi_{\sim}$ colours the background periodically.
Another important ingredient is the locality of the bisimulation conditions,
resulting from the fact that the counter value can change by at most $1$
per step.

To explain the ``grid-partition'', we start with
 considering the finite automaton $\F_{\Delta}$
underlying $\Delta$;
$\F_{\Delta}$ behaves like $\Delta$ ``pretending'' that the
counter is always positive.


\begin{definition}\label{D underlying}
For a pOCA
$\Delta=(Q,\{Z,X\},\Sigma,\mathord{\btran{}})$,
in the {\em underlying finite pLTS}
$\F_\Delta=(Q,\Sigma,\tran{})$ we have
a transition $p\tran{a}d'$ if and only if there is a transition $pX\btran{a}d$
such that $d'(q)=d(q,\varepsilon)+d(q,X)+d(q,XX)$ (for all
$q\in Q$).
\end{definition}
%
%
Using standard partition-refinement arguments, we observe
that $\sim_{k-1}=\sim_k=\sim$ on $\F_{\Delta}$ when $k=|Q|$.
For configurations of~$\Delta$ we now define the distance
$\distINC$ to the set of configurations which are
``INCompatible'' with $\F_{\Delta}$.
%
\begin{definition}
Assuming a pOCA
$\Delta=(Q,\{Z,X\},\Sigma,\btran{})$, where $|Q|=k$,
\\
we define $\INC\subseteq Q\times (\{X\}^* Z)$ and
$\distINC:Q\times  (\{X\}^* Z) \rightarrow\N\cup\{\infty\}$ as follows:
\begin{itemize}
\item
$\INC=\{pX^mZ\mid \forall q\in Q:pX^mZ\not\sim_k  q\}$ (where $q$
is a state in $\F_{\Delta}$),
\item
$\distINC(pX^mZ)=\min\,\{\,\ell\mid\exists q\gamma\in\INC:
pX^mZ(\btran{})^\ell q\gamma \,\}$\,;
we set $\min\,\emptyset\,=\infty$.
\end{itemize}
\end{definition}
Since $pX^mZ\sim_m p$ (by induction on $m$), and thus
$pX^mZ\in\INC$ implies
$m<k$,
we can surely construct $\INC$ for a given pOCA in polynomial space.

\begin{proposition}
\label{L nonreachable}
\hfill
\begin{enumerate}
\item
If $pX^mZ\sim qX^nZ$ then $\distINC(pX^mZ)=\distINC(qX^nZ)$.
\item
\mbox{If 
$\distINC(pX^mZ)=\distINC(qX^nZ)=\infty$
then
$pX^mZ\sim qX^nZ$ iff $pX^mZ\sim_k qX^nZ$.}
\end{enumerate}
\end{proposition}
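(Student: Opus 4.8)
The plan is to handle the two parts separately, reducing both to two structural facts: that $\INC$ is closed under bisimilarity, and that every configuration with $\distINC = \infty$ is bisimilar to a state of the finite system $\F_\Delta$. Throughout I work in the disjoint union of $\mathcal{S}(\Delta)$ and $\F_\Delta$ (in which $\INC$ and $\distINC$ are defined), using the standard fact that bisimilarity and its approximants among $\mathcal{S}(\Delta)$-configurations are unaffected by adjoining the disjoint finite system.

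For part~1 I would first observe that $\INC$ is a union of $\sim$-classes: if $c \sim c'$ then $c \sim_k c'$, so by transitivity of $\sim_k$ the condition ``$c \not\sim_k q$ for all $q \in Q$'' holds for $c$ iff it holds for $c'$, i.e.\ $c \in \INC$ iff $c' \in \INC$. Writing $D_\ell = \{c : \distINC(c) \le \ell\}$, I would then show by induction on $\ell$ that each $D_\ell$ is a union of $\sim$-classes. The base case $D_0 = \INC$ is the observation just made. For the step, suppose $c \sim c'$ and $c \in D_{\ell+1}\setminus\INC$, so $c \tran{a} d$ with some successor $c'' \in \support{d}\cap D_\ell$; bisimilarity yields $c' \tran{a} d'$ with $d,d'$ being $\sim$-equivalent, and since the class of $c''$ carries mass $\ge d(c'')>0$ under $d$ it does so under $d'$ too, giving a successor $c''' \sim c''$ of $c'$ with $c''' \in D_\ell$ by the induction hypothesis; hence $c' \in D_{\ell+1}$. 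As $\distINC(c) = \min\{\ell : c \in D_\ell\}$ (and $\infty$ when no such $\ell$ exists), $\distINC$ is constant on $\sim$-classes, which is part~1.

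The crux is part~2, whose forward implication is immediate from $\sim \subseteq \sim_k$. For the converse I would prove the following \emph{Main Lemma} by induction on $n$: if $\distINC(c) \ge n$ and $c \sim_k q$ for a state $q$ of $\F_\Delta$, then $c \sim_n q$. For $n \le k$ this is trivial. For the step from $n$ to $n+1$ (so $n \ge k$), take $c \tran{a} d_c$; from $c \sim_k q$ we get $q \tran{a} d_q$ with $d_c, d_q$ being $\sim_{k-1}$-equivalent (and symmetrically for $q$-moves), and it remains to upgrade this to $\sim_n$-equivalence. Now $\distINC(c) \ge n+1$ forces every successor $c' \in \support{d_c}$ to have $\distINC(c') \ge n \ge 1$, so $c' \notin \INC$ and $c' \sim_k q'$ for some $\F_\Delta$-state $q'$; the induction hypothesis gives $c' \sim_n q'$. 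Replacing each $c'$ by such a $q'$ produces a distribution $\hat d_c$ supported on $\F_\Delta$-states that is $\sim_n$-equivalent to $d_c$ (only mass inside $\sim_n$-classes is moved). Finally $\hat d_c$ and $d_q$ are both supported on $\F_\Delta$; their $\sim_{k-1}$-equivalence follows by transitivity from that of $d_c,d_q$ and of $d_c,\hat d_c$ (as $n \ge k-1$, $\sim_n$-equivalence implies $\sim_{k-1}$-equivalence), and on $\F_\Delta$ the collapse $\sim_{k-1} = \sim_k = \sim$ makes this already $\sim_n$-equivalence. Chaining the three equivalences gives $d_c, d_q$ being $\sim_n$-equivalent, whence $c \sim_{n+1} q$.

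Applying the Main Lemma with $\distINC(c) = \infty$ (hence $\ge n$ for every $n$) yields $c \sim q$ whenever $c \sim_k q$; that is, every $\distINC = \infty$ configuration is bisimilar to its matching $\F_\Delta$-state. Part~2 then follows: given $\distINC(pX^mZ) = \distINC(qX^nZ) = \infty$ and $pX^mZ \sim_k qX^nZ$, choose $\F_\Delta$-states $q_1,q_2$ with $pX^mZ \sim_k q_1$ and $qX^nZ \sim_k q_2$; then $q_1 \sim_k q_2$, so $q_1 \sim q_2$ by the collapse on $\F_\Delta$, and $pX^mZ \sim q_1 \sim q_2 \sim qX^nZ$. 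I expect the upgrading step in the Main Lemma to be the main obstacle: the bisimulation game on $\sim_k$ only passes $\sim_{k-1}$-information one level down, and it is exactly the combination of the distance bound (keeping all successors compatible and far from $\INC$, so the induction hypothesis applies) with the finite-system collapse that raises this back to the required level $\sim_n$.
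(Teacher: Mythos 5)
Your proof is correct, but it takes a genuinely different route from the paper's. The paper dismisses Point~1 as obvious and proves Point~2 coinductively, by exhibiting a single relation and verifying that it is a bisimulation, namely $\{(q_1X^{n_1}Z,q_2X^{n_2}Z) \mid q_1X^{n_1}Z\sim_k q_2X^{n_2}Z \text{ and } \distINC(q_1X^{n_1}Z)=\distINC(q_2X^{n_2}Z)=\infty\}$; the verification of the transfer condition needs exactly the ingredients you isolate (successors of $\distINC=\infty$ configurations again have $\distINC=\infty$; any such configuration is $\sim_k$-related to some state of $\F_\Delta$; the collapse $\sim_{k-1}=\sim_k=\sim$ on $\F_\Delta$ upgrades the $\sim_{k-1}$-equivalence of matched distributions to equivalence with respect to the exhibited relation). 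You instead argue inductively on the approximants: your Main Lemma shows that any configuration $c$ with $\distINC(c)\ge n$ and $c\sim_k q$ for an $\F_\Delta$-state $q$ satisfies $c\sim_n q$, hence every $\distINC=\infty$ configuration is outright bisimilar to an $\F_\Delta$-state, and Point~2 follows by transitivity. Comparing the two: the paper's argument is shorter and does not need the convergence $\bigcap_n \mathord{\sim_n}=\mathord{\sim}$, which your route uses (legitimately, since the pLTS is finitely branching); your argument yields a stronger and reusable structural fact (infinite-distance configurations are bisimilar to states of the finite system $\F_\Delta$), and it sidesteps a small awkwardness in the paper's formulation --- under the paper's definition a bisimulation must be an equivalence relation on the whole state space, whereas the displayed set is reflexive only on infinite-distance configurations, so strictly it must be patched (e.g.\ united with the identity) before the definition applies. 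Your careful induction for Point~1 (each $D_\ell$ is a union of $\sim$-classes) is a fully worked-out version of what the paper calls obvious.
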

The proof is the same as in the non-probabilistic case.
(Point 1 is obvious. For Point 2 we verify that the set
$\{\,(q_1X^{n_1}Z,q_2X^{n_2}Z)\mid q_1X^{n_1}Z\sim_k
q_2X^{n_2}Z\text{ and } \distINC(q_1X^{n_1}Z) = \distINC(q_2X^{n_2}Z) = \infty$
is a bisimulation.)

Consider a shortest path from $pX^{m}Z$ to $\INC$ (for large $m$).
It is not hard to prove (as in~\cite[Lemma 10]{BGJ:Concur10}) that such a path can be based on iterating
a simple counter-decreasing cycle (of length
$\leq k$), possibly preceded by a polynomial prefix and followed by a
polynomial suffix.
So (finite) $\distINC(pX^{m}Z)$ can be always expressed
by the use of linear functions $\frac{\ell}{e}m+b$ where
$\ell, e\leq k$ are the length and the decreasing effect of a simple
cycle and $b$ is bounded by a polynomial in $k$.
It follows that if we have $\distINC(pX^{m}Z)=\distINC(qX^{n}Z)<\infty$,
then $n=\frac{\ell_1e_2}{e_1\ell_2}m+b'$,
which shows that
$(m,n,(p,q))$ lies in one of the above mentioned belts, or in the
initial space when $m,n$ are small.


As a consequence, in the background points $(m,n,(p,q))$
we have either $\distINC(pX^{m}Z)=\distINC(qX^{n}Z)=\infty$,
and $\chi_{\sim}(m,n,(p,q))=1$ if and only if $pX^{m}Z\sim_k qX^{n}Z$,
or $\distINC(pX^{m}Z)\neq\distINC(qX^{n}Z)$
(and thus  $\chi_{\sim}(m,n,(p,q))=0$).
So we can easily compute $\chi_{\sim}$ for any background point in
polynomial space.

The above mentioned shortest paths to $\INC$ also show that
if we choose $\psi=k!$ (so $\psi=O(2^{k\log k})$) then
we have
$pX^{m}Z\tran{}^* \INC$ if and only if $pX^{(m+\psi)}Z\tran{}^*\INC$ (for $m$ larger than
some polynomial bound), since the counter-effect of each simple cycle
divides $\psi$.
Hence
$\psi$ is a background period as mentioned above.

A nondeterministic algorithm, verifying that $p_0X^{m_0}Z\sim
q_0X^{n_0}Z$ for  $(m_0,n_0,(p_0,q_0))$ in the initial or belt-space,
is based on ``moving a
vertical window of width $3$'' (as depicted in Fig.~\ref{fig:belts});
in each phase, the window is moved by $1$ (to the right),
its intersection with the initial and belt space
(containing polynomially many
points) is computed, a colouring
on this intersection is guessed
 ($\chi_{\sim}$ is intended)
and its (local) consistency is
checked (for which also $\chi_{\sim}$
on the neighbouring background points is computed).
More precisely, in the first, i.e. leftmost, window position a
colouring in all
three (vertical) slices is guessed and the local consistency in the first two
slices is checked; after any later shift of the window
by one to the right, a colouring in
the new (the rightmost) slice is
guessed (the guesses in the previous two slices being remembered), and
the consistency in the current middle slice is checked.
If this is successfully performed
for exponentially many steps,
after $(m_0,n_0,(p_0,q_0))$ has been coloured with $1$, then
it is guaranteed that
the algorithm could successfully run forever; the pigeonhole principle
induces that each belt could be periodically coloured, with an
exponential period compatible with the period of the background-border
of the belt.
Such a successful run of the algorithm, exponential in time but
obviously only
polynomial in the required space, is thus a witness of $p_0X^{m_0}Z\sim
q_0X^{n_0}Z$. Since PSPACE=NPSPACE, we have thus sketched a proof
 of Theorem~\ref{thm-bisim-pOCA-inPspace}.

It remains to define precisely the consistency of a colouring,
guaranteeing that a successful run of the algorithm really witnesses
$p_0X^{m_0}Z\sim q_0X^{n_0}Z$. (As already mentioned, this is the main
change wrt~\cite{BGJ:Concur10}.)
We use the following particular variant of
characterizing (probabilistic) bisimilarity.
%
%
Given a pLTS
$(S,\Sigma,\tran{})$,
we say that $(s,t)$ is \emph{consistent w.r.t.}
a relation $R$ on $S$ (not necessarily an equivalence)
if
for each $s\tran{a}d$ there is  $t\tran{a}d'$,
and conversely for each $t\tran{a}d'$ there is  $s\tran{a}d$,
such that $d, d'$ are $R'$-equivalent where
$R'$
is the least equivalence containing the set
$\{(s',t')\mid s\tran{}s', t\tran{}t', (s',t')\in R\}$.
%
A \emph{relation} $R$ is \emph{consistent} if each $(s,t)\in R$ is consistent w.r.t.
$R$. The following proposition can be verified along the standard
lines.

\begin{proposition}\label{prop:localconsistency}
$\sim$ is consistent.
If $R$ is consistent then $R\subseteq \mathord{\sim}$.
\end{proposition}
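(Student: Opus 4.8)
The plan is to establish the two assertions separately. The easier one, that a consistent relation $R$ satisfies $R \subseteq \sim$, I would obtain through the approximants $\sim_n$ from the inductive characterisation of $\sim$, proving by induction on $n$ that $R \subseteq \sim_n$ and then using $\bigcap_{n} \sim_n = \sim$. The base case $n=0$ holds since $\sim_0 = S \times S$. For the inductive step, assume $R \subseteq \sim_n$ and take $(s,t) \in R$; consistency gives, for every $s \tran{a} d$, a transition $t \tran{a} d'$ with $d, d'$ being $R'$-equivalent, where $R'$ is the least equivalence containing $\{(s',t') : s \tran{} s',\ t \tran{} t',\ (s',t') \in R\}$, and symmetrically for transitions of $t$. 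The key point is that this generating set lies in $R \subseteq \sim_n$, so $R' \subseteq \sim_n$; thus each $\sim_n$-class is a disjoint union of $R'$-classes, and summing masses shows that $R'$-equivalence of $d, d'$ entails $\sim_n$-equivalence. Hence $(s,t) \in \sim_{n+1}$, closing the induction.

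For the statement that $\sim$ itself is consistent, I would fix $(s,t) \in \sim$ and verify the condition with respect to the least equivalence $R'$ containing $\{(s',t') : s \tran{} s',\ t \tran{} t',\ s' \sim t'\}$. Given $s \tran{a} d$, the fact that $\sim$ is a bisimulation supplies $t \tran{a} d'$ with $d, d'$ already $\sim$-equivalent, and the task is to upgrade this to $R'$-equivalence. Here the direction is reversed compared with the first part: since $R'$ is generated by $\sim$-pairs we have $R' \subseteq \sim$, so $R'$ is a \emph{refinement} of $\sim$ and $R'$-equivalence is the stronger demand. I would resolve this by a support argument. Writing $U = \{s' : s \tran{} s'\}$ and $V = \{t' : t \tran{} t'\}$ for the one-step successor sets, $d$ is supported within $U$ and $d'$ within $V$. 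A $\sim$-class $C$ is split by $R'$ precisely when $C$ meets only one of $U, V$; but if, say, $C \cap V = \emptyset$, then $d'(C) = 0$, so $d(C) = d'(C) = 0$ by $\sim$-equivalence, and the mass that $R'$ redistributes inside $C$ is zero on both sides. On every $\sim$-class of positive mass, $C \cap (U \cup V)$ is a single $R'$-class, on which $d$ and $d'$ agree because they agree on the whole of $C$. Thus $d, d'$ are $R'$-equivalent; the converse matching of transitions of $t$ follows by the same reasoning, using that $R'$ is symmetric.

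I expect the genuine difficulty to be confined to this second part. In the first part we have $R' \subseteq \sim_n$ and must deduce $\sim_n$-equivalence from $R'$-equivalence, i.e.\ pass from the finer to the coarser relation, which is a one-line summation. In the second part we likewise have $R' \subseteq \sim$, but now must deduce the finer $R'$-equivalence from the coarser $\sim$-equivalence, which is not automatic. The crucial insight is that $R'$ and $\sim$ can differ only on $\sim$-classes of zero probability, which is exactly what the restriction of the supports of $d, d'$ to one-step successors guarantees. The remaining verifications --- the precise structure of $R'$-classes inside a positive-mass $\sim$-class, and the symmetric matching directions --- are routine.
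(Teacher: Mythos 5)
Your proof is correct. The paper does not actually spell out a proof of this proposition---it only remarks that it ``can be verified along the standard lines''---and your argument (induction over the approximants $\sim_n$, using $\bigcap_n \sim_n \,=\, \sim$, for the claim that consistency implies $R \subseteq \mathord{\sim}$; and the support-based refinement argument upgrading $\sim$-equivalence of the matched distributions to $R'$-equivalence for the claim that $\sim$ is consistent) is exactly that standard verification, with the genuine subtlety correctly located in the second of these. One cosmetic inaccuracy: a $\sim$-class $C$ meeting both $U$ and $V$ can still be split by $R'$, namely into $C \cap (U \cup V)$ plus singletons outside $U \cup V$, so your ``precisely when'' is not literally right; but since those singletons carry zero mass under both $d$ and $d'$ (their supports lie in $U$ and $V$ respectively), your subsequent accounting already covers this and nothing breaks.
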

%
%
Our algorithm can surely (locally) check the above defined
consistency of the
constructed $\chi$
(i.e. of $R_{\chi}$).


\subsection{Bisimilarity of pvPDA is in EXPTIME} \label{sec-upper-bounds-visibly}

It is shown in~\cite[Theorem 3.3]{SrbaVisiblyPDA:2009} that the bisimilarity problem for (non-probabilistic) vPDA is EXPTIME-complete.
We will show that the same holds for pvPDA.
First we show the upper bound:

\newcommand{\stmtthmupperboundsvisibly}{
The bisimilarity problem for pvPDA is in EXPTIME.
}
\begin{theorem} \label{thm-upper-bounds-visibly}
\stmtthmupperboundsvisibly
\end{theorem}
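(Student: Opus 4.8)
The plan is to refine the generic reduction of Theorem~\ref{thm:prob-to-nondet} rather than apply it as a black box. First I would check that, when $\Delta$ is a pvPDA, the PDA $\Delta'$ produced by the construction is in fact a vPDA. The only delicate point is the $\#$-action: a rule $q\stacksymba{T}\ctran{\#}p\alpha$ pushes, leaves the height unchanged, or pops according to whether $|\alpha|$ is $2$, $1$, or~$0$, so a single $\#$ cannot be visibly. However, since $\Delta$ is visibly, every rule $pX\btran{a}d$ has $\support{d}$ of a uniform type (all $\varepsilon$, all in $\Gamma$, or all in $\Gamma\times\Gamma$) determined by whether $a\in\Sigmar$, $a\in\Sigmai$, or $a\in\Sigmac$. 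Hence for a fixed $d$ every set $T\subseteq\support{d}$ and every symbol $\stacksymba{T}$ inherits this single type, and I would split $\#$ into three copies $\#_r,\#_i,\#_c$ used respectively for popping, internal, and pushing $\#$-transitions. The intermediate actions $a\in\Sigma$ (for $qX\ctran{a}q\stacksymba{d}$) and $w\in W$ (for $q\stacksymba{d}\ctran{w}q\stacksymba{T}$) are all height-preserving and go into the internal alphabet, giving the partition $\Sigma'_r=\{\#_r\}$, $\Sigma'_c=\{\#_c\}$, $\Sigma'_i=\Sigma\cup W\cup\{\#_i\}$. Since any two configurations compared in the bisimulation game arise from rules of the same visible type, the relabelling is consistent and does not change which moves can be matched, so Lemma~\ref{lem:nondet-constr-correct} still yields equivalence of the two bisimilarity questions.

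The naive next step, invoking the EXPTIME procedure of~\cite[Theorem 3.3]{SrbaVisiblyPDA:2009} on $\Delta'$, only gives 2EXPTIME because $\Delta'$ is exponentially larger than $\Delta$. The observation that saves an exponential is that the blow-up is confined to the stack alphabet and the transition rules: the new objects $\stacksymba{d}$ and $\stacksymba{T}$ are stack symbols (one per rule, respectively one per subset of a support, hence $2^{O(|\Delta|)}$ of them), whereas the control-state set $Q$ is copied verbatim and stays polynomial in $|\Delta|$. I would therefore carry out a fine-grained analysis of the vPDA bisimilarity algorithm and show that its running time has the form $2^{\mathrm{poly}(|Q|)}\cdot\mathrm{poly}(|\Gamma|,N)$, where $N$ is the number of rules, i.e.\ singly exponential in the number of control states but only polynomial in the stack-alphabet size and the transition relation.

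Concretely, I expect to recast vPDA bisimilarity as finite-state bisimilarity of an LTS whose states are pairs consisting of a head $qX$ and a \emph{return summary}, a relation on $Q$ recording which pairs of control states reached after a matching pop are currently deemed equivalent in the context below the top symbol. Because call and return actions are synchronised in a vPDA, in the bisimulation game both configurations always perform moves of the same visible type and their stacks stay height-synchronised, so this summary is exactly the finite information that must be threaded through the stack. The number of summaries is $2^{O(|Q|^2)}$, so the finite LTS has $|Q|\cdot|\Gamma|\cdot 2^{O(|Q|^2)}$ states and a transition count polynomial in $N$ times $2^{O(|Q|^2)}$; deciding finite-state bisimilarity on it is polynomial in its size. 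Substituting $\Delta'$, where $|Q|$ is polynomial and $|\Gamma'|,N'$ are $2^{O(|\Delta|)}$, gives an overall bound of $2^{O(|\Delta|^2)}$, which is EXPTIME.

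The main obstacle is the second paragraph: extracting from~\cite{SrbaVisiblyPDA:2009}, or re-deriving, the precise statement that the exponential dependence of the vPDA algorithm is in $|Q|$ alone and not in $|\Gamma|$ or $N$. If the published bound is stated only as ``exponential in the total size'', I would reprove the upper bound with this refined accounting, establishing soundness and completeness of the summary-based finite LTS against $\sim$ and verifying that its size grows only polynomially with the stack alphabet and the rules. By comparison, checking that the relabelled $\Delta'$ is visibly and that splitting $\#$ leaves the correctness of the reduction intact is routine.
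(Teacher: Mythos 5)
Your first step --- observing that in a pvPDA every rule's support has a uniform stack-effect type, so that $\#$ can be split into three typed copies giving the partition $\Sigma'_r=\{\#_r\}$, $\Sigma'_c=\{\#_c\}$, $\Sigma'_i=\Sigma\cup W\cup\{\#_i\}$ --- is exactly the paper's move, and is correct. Your second step diverges from the paper: the paper does \emph{not} prove a refined ``exponential only in $|Q|$'' bound for a general vPDA algorithm and then run it on $\Delta'$ as a black box. Instead it introduces \emph{forcing relations}, re-proves the vPDA upper bound via a least-fixed-point equation system over them, and then, for $\Delta'$, collapses the three-step simulation of each original action $a$ into a single composite local forcing relation $\se{a}'\bullet(\bigcup_{w\in W}\se{w}')\bullet\se{\#_c}'$, computed once in a preprocessing phase by an alternating-PSPACE ($=$ EXPTIME) procedure; the main fixed-point iteration then ranges only over objects indexed by the \emph{original} $Q$ and $\Gamma$. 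Your plan (refined accounting, then black-box application) is in principle a viable alternative, and your diagnosis that the blow-up sits entirely in $\Gamma'$ and the rule set while $Q$ is unchanged is the right one.

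The genuine gap is the device you propose for the refined bound. Recasting vPDA bisimilarity as \emph{finite-state bisimilarity of an LTS} whose states are pairs (head, return summary) cannot work as stated, because the return summary is inherently a property of the \emph{pair} of configurations being compared, and of bisimilarity itself. After a synchronized call move $pX \to p'X'X''$ on one side and $qY \to q'Y'Y''$ on the other, the summary that must sit below the new tops is $\{(p'',q'') \mid p''X''\alpha \sim q''Y''\beta\}$: it depends on the \emph{other} configuration's pushed symbol $Y''$ and on the relation $\sim$ being computed. An LTS transition relation must be defined locally, one state at a time, so the successor of the single state $(p'X',\,\cdot\,)$ cannot be assigned this summary; and if you instead let each state remember its own pushed symbol so as to resolve the summary at return time, the state ends up carrying the whole stack of pushed symbols, i.e., you are back at the infinite system. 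The standard repair is to formulate the summarization as a finite two-player \emph{game}: after a synchronized push, Defender proposes the new summary $R'\subseteq Q\times Q$, and Attacker either continues at the new level with summary $R'$ or challenges a pair of $R'$ by descending to verify it against the old summary. That is a safety game on an arena of size $\mathrm{poly}(|\Gamma|,N)\cdot 2^{O(|Q|^2)}$, which would indeed yield your refined bound --- and it is, in algebraic guise, precisely what the paper's forcing relations and their Kleene iteration accomplish. As written, however, the soundness-and-completeness step you defer to the end is not routine bookkeeping: for the plain LTS encoding it is false, so this is the step that needs a genuinely different formulation. (The issue of initial configurations with different stack heights is minor by comparison; the paper handles it through the choice of target set for its largest forcing relation.)
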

In~\cite{SrbaVisiblyPDA:2009} the upper bound is proved using a reduction to the model-checking problem
 for (non-visibly) PDA and the modal $\mu$-calculus.
The latter problem is in EXPTIME by~\cite{Walukiewicz2001}.
This reduction does not apply in the probabilistic case.
The reduction from Section~\ref{sec-prob-to-nondet} cannot be directly applied either,
 since it incurs an exponential blowup, yielding only a double-exponential algorithm if combined with the result of~\cite{Walukiewicz2001}.
Therefore we proceed as follows:
First we give a direct proof for (non-probabilistic) vPDA,
 i.e., we show via a new proof that the bisimilarity problem for vPDA is in EXPTIME.
Then we show that the reduction from Section~\ref{sec-prob-to-nondet} yields a non-probabilistic vPDA
 that is exponential only in a way that the new algorithm can be made run in single-exponential time:
The crucial observation is that the reduction replaces each step in the pvPDA by three steps in the (non-probabilistic) vPDA.
An exponential blowup occurs only in intermediate states of the new LTS.
Our algorithm allows to deal with those states in a special pre-processing phase.
\iftechrep{%
See Appendix~\ref{app-upper-bounds}
}{%
See~\cite{techreport}
}%
for details.

\section{Lower Bounds} \label{sec-lower-bounds}

In this section we show hardness results for pOCA and pvPDA. We
start by defining two gadgets, adapted from~\cite{CBW12}, that will be
used for both results.  The gadgets are pLTS that allow us to simulate AND and
OR gates using probabilistic bisimilarity. We depict the gadgets in
Figure~\ref{fig:gadgets}, where we assume that
all edges have probability $1/2$ and have the same label. The gadgets satisfy
the following propositions
(here
$s \tran{a}t_1\mid t_2$ is a shorthand for
$s\tran{a}d$ where $d(t_1)=d(t_2)=0.5$).


\begin{proposition}\label{prop:ANDgadget}
\textnormal{(AND-gadget)}
Suppose $s,s'$,
$t_1,t'_1$, $t_2,t'_2$ are states in a pLTS such that
 $t_1\not\sim t'_2$ and the only transitions
outgoing from $s,s'$ are
$s\tran{a}t_1\mid t_2$
and
$s'\tran{a}t'_1\mid t'_2$\,.
Then
$s\sim s'$ if and only if $t_1\sim t'_1$ $\land$
$t_2\sim t'_2$.
\end{proposition}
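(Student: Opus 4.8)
The plan is to reduce both implications to a single statement about the two distributions that $s$ and $s'$ emit. Write $d$ and $d'$ for the distributions with $d(t_1)=d(t_2)=\tfrac12$ and $d'(t_1')=d'(t_2')=\tfrac12$, so that $s\tran{a}d$ and $s'\tran{a}d'$ are the \emph{only} outgoing transitions of $s$ and $s'$. Because $s$ has a unique transition and it carries the same label $a$ as the unique transition of $s'$, the bisimulation game from $(s,s')$ collapses to a single round, and I claim that $s\sim s'$ holds if and only if $d$ and $d'$ are $\sim$-equivalent. The forward direction of this claim is immediate: the move $s\tran{a}d$ can only be answered by $s'\tran{a}d'$, and symmetrically. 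For the converse I would take the equivalence closure $\overline{R}$ of $\mathord{\sim}\cup\{(s,s')\}$, whose single non-$\sim$ class is $[s]_\sim\cup[s']_\sim$, and verify it is a bisimulation: pairs $(u,v)$ with $u\sim v$ match their moves up to $\sim$-equivalence, hence up to $\overline{R}$-equivalence since $\overline{R}$ is coarser; and for a pair joining the two merged classes, the unique $a$-moves lead (up to $\sim$) to $d$ and $d'$, which are $\sim$-equivalent and therefore $\overline{R}$-equivalent. Thus the whole proposition reduces to
\[
d\text{ and }d'\text{ are }\sim\text{-equivalent}\iff t_1\sim t_1'\ \wedge\ t_2\sim t_2'.
\]

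For the left-to-right direction of this reduced equivalence I would argue directly from $d(E)=d'(E)$ for every $\sim$-class $E$, exploiting the hypothesis $t_1\not\sim t_2'$. First I rule out $t_1\sim t_2$: if it held, then $d([t_1]_\sim)=1$, forcing $d'([t_1]_\sim)=1$ and in particular $t_2'\sim t_1$, contradicting $t_1\not\sim t_2'$. Hence $t_1\not\sim t_2$ and $d([t_1]_\sim)=\tfrac12$, so $d'([t_1]_\sim)=\tfrac12$, meaning exactly one of $t_1',t_2'$ lies in $[t_1]_\sim$; since $t_2'\not\sim t_1$, it must be $t_1'$, giving $t_1\sim t_1'$. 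Examining the class $[t_2]_\sim$ the same way (now discarding $t_1'$ because $t_1'\sim t_1\not\sim t_2$) yields $t_2\sim t_2'$.

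For the right-to-left direction, assume $t_1\sim t_1'$ and $t_2\sim t_2'$. Together with $t_1\not\sim t_2'$ this forces $t_1\not\sim t_2$, so the four states fall into exactly two distinct $\sim$-classes $E_1\supseteq\{t_1,t_1'\}$ and $E_2\supseteq\{t_2,t_2'\}$. Then $d(E_1)=d'(E_1)=\tfrac12$ and $d(E_2)=d'(E_2)=\tfrac12$, while every other class receives mass $0$ from both distributions, so $d$ and $d'$ are $\sim$-equivalent, completing the reduction and hence the proposition.

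The main obstacle — in fact the only non-routine point — is the decoding step of the second paragraph: from the purely numerical identity $d(E)=d'(E)$ one must recover the \emph{correct} pairing $t_1\leftrightarrow t_1'$, $t_2\leftrightarrow t_2'$ and exclude the \emph{crossed} pairing $t_1\leftrightarrow t_2'$, $t_2\leftrightarrow t_1'$, which would equalise the masses just as well. This is exactly where the hypothesis $t_1\not\sim t_2'$ is indispensable, since it kills the crossed pairing. (One could instead invoke Lemma~\ref{lem:bisim-as-game} in place of the explicit class-counting, but the elementary argument above appears shortest.)
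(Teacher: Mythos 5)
Your proof is correct. Note that the paper itself gives no proof of Proposition~\ref{prop:ANDgadget}: the AND- and OR-gadgets are stated as adapted from~\cite{CBW12}, and their verification is left to the reader, so there is no in-paper argument to compare against. Your route is the natural one and is sound in all its steps: the reduction of $s\sim s'$ to $\sim$-equivalence of the two emitted distributions (the forward direction by uniqueness of the outgoing transitions, the converse by checking that the equivalence closure of $\mathord{\sim}\cup\{(s,s')\}$ is a bisimulation, using that $\sim$-equivalence of distributions is preserved when passing to a coarser equivalence), followed by the class-counting argument, which correctly isolates the role of the hypothesis $t_1\not\sim t_2'$ in excluding both the case $t_1\sim t_2$ (which would force mass $1$ on a single class) and the crossed pairing. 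One cosmetic remark: in the converse of your first step, a state $u\sim s$ may have several outgoing transitions rather than a literally unique one, but as your parenthetical ``(up to $\sim$)'' indicates, all of them are $a$-labelled with distributions $\sim$-equivalent to $d$, so the matching argument goes through unchanged.
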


\begin{proposition}\label{prop:ORgadget}
\textnormal{(OR-gadget)}
Suppose $s,s'$,
$t_1,t'_1$, $t_2,t'_2$,
and $u_{12}$, $u_{1'2}$, $u_{12'}$, $u_{1'2'}$
are states in a pLTS.
Let the only transitions
outgoing from
$s,s',u_{12}, u_{1'2},u_{12'},u_{1'2'}$
be
\begin{center}
$s\tran{a}u_{12}\mid u_{1'2'}$\,,
$s'\tran{a}u_{12'}\mid u_{1'2}$\,,
\\
$u_{12}\tran{a}t_1\mid t_2$\,,
$u_{1'2'}\tran{a}t'_1\mid t'_2$\,,
$u_{12'}\tran{a}t_1\mid t'_2$\,,
$u_{1'2}\tran{a}t'_1\mid t_2$\,.
\end{center}
Then
$s\sim s'$ if and only if $t_1\sim t'_1$ $\lor$
$t_2\sim t'_2$.
\end{proposition}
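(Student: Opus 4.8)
The plan is to peel the statement apart in three layers, each time reducing a bisimilarity question to an elementary matching of two uniform two-point distributions. Write $P_1$ for the assertion $t_1\sim t_1'$ and $P_2$ for $t_2\sim t_2'$; the goal is to show $s\sim s'$ iff $P_1\lor P_2$. Since $s$ and $s'$ each have a single outgoing transition, $s\tran{a}u_{12}\mid u_{1'2'}$ and $s'\tran{a}u_{12'}\mid u_{1'2}$, the bisimilarity clause collapses: $s\sim s'$ iff these two successor distributions are $\sim$-equivalent (forward by uniqueness of the matching transition, backward because adjoining $(s,s')$ to $\sim$ then yields a bisimulation). The first step I would record is the combinatorial observation that two distributions each assigning $\tfrac12$ to two points, with supports $\{x,y\}$ and $\{x',y'\}$, are $\sim$-equivalent exactly when the multisets of $\sim$-classes $\{[x],[y]\}$ and $\{[x'],[y']\}$ coincide; this is immediate from the definition of $\sim$-equivalence (equal mass on every class). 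Applied here it gives two alternatives: either \textbf{(A)} $u_{12}\sim u_{12'}$ and $u_{1'2'}\sim u_{1'2}$, or \textbf{(B)} $u_{12}\sim u_{1'2}$ and $u_{1'2'}\sim u_{12'}$.

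For the forward direction I would assume $s\sim s'$, so (A) or (B) holds, and apply the same matching observation one level down, since each $u$ again has a unique $a$-transition to a uniform two-point distribution. In case (A), $u_{12}\sim u_{12'}$ means $\{[t_1],[t_2]\}=\{[t_1],[t_2']\}$; cancelling the common class $[t_1]$ forces $[t_2]=[t_2']$, i.e.\ $P_2$. In case (B), $u_{12}\sim u_{1'2}$ means $\{[t_1],[t_2]\}=\{[t_1'],[t_2]\}$, and cancelling the common class $[t_2]$ gives $[t_1]=[t_1']$, i.e.\ $P_1$. Hence $P_1\lor P_2$ in either case.

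For the backward direction I would run the argument in reverse. If $P_2$ holds I exhibit alternative (A): $u_{12}\sim u_{12'}$ since $\{t_1,t_2\}$ and $\{t_1,t_2'\}$ share $t_1$ and $t_2\sim t_2'$, and symmetrically $u_{1'2'}\sim u_{1'2}$ using the shared point $t_1'$; these two equivalences make the successor distributions of $s$ and $s'$ $\sim$-equivalent, whence $s\sim s'$. If instead $P_1$ holds I exhibit alternative (B) identically, now using the shared points $t_2$ and $t_2'$ together with $t_1\sim t_1'$.

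The one point needing care, which I regard as the main (if minor) obstacle, is the repeated passage from $\sim$-equivalence of successor distributions to actual bisimilarity of the states producing them, used both at the $u$-level and at the $s$-level. Since every state involved has no transition other than the single $a$-transition already exhibited, this is discharged by checking that adjoining the relevant pair to $\mathord{\sim}$ and closing under equivalence is again a bisimulation: the only new obligation concerns that pair, and it is met because $\sim$-equivalent distributions are \emph{a fortiori} $R$-equivalent for any equivalence $R\supseteq\mathord{\sim}$. The remaining ingredient, two-element multiset cancellation, is purely combinatorial. Notably, no genericity hypothesis on the $t_i,t_i'$ is required, in contrast to the AND-gadget of Proposition~\ref{prop:ANDgadget}, which needed $t_1\not\sim t_2'$ precisely to rule out the crossed matching.
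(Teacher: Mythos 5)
Your proof is correct. Note, however, that the paper itself states Proposition~\ref{prop:ORgadget} (like Proposition~\ref{prop:ANDgadget}) without any proof --- the gadgets are adapted from~\cite{CBW12} --- so there is no in-paper argument to compare yours against; your verification fills that gap. The route you take is the natural one: collapse each state's unique $a$-transition so that bisimilarity of the sources reduces to $\sim$-equivalence of the successor distributions, characterise $\sim$-equivalence of two uniform two-point distributions as equality of the multisets of $\sim$-classes of their supports, and finish by two-element multiset cancellation; you also correctly observe that, unlike the AND-gadget, no genericity hypothesis such as $t_1\not\sim t_2'$ is needed, which matches the proposition as stated. The only place to tighten is the closure step you yourself flag: when you adjoin a pair $(x,y)$ to $\mathord{\sim}$ and close under equivalence, the new proof obligations are not literally only for that pair but for every pair in $[x]_\sim\times[y]_\sim$, since the closure merges the two $\sim$-classes. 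These extra obligations are discharged by the same argument --- any $a$-transition of a state in $[x]_\sim$ has distribution $\sim$-equivalent to that of $x$, hence (by transitivity of $\sim$-equivalence of distributions, and since $\sim$-equivalence implies $R$-equivalence for any equivalence $R\supseteq\mathord{\sim}$) $R$-equivalent to the distribution of any $a$-transition of a state in $[y]_\sim$ --- so the imprecision is cosmetic rather than a genuine gap.
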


\subsection{Bisimilarity of pOCA is PSPACE-hard}

In this section we prove the following:
\begin{theorem} \label{thm-pOCA-hard}
 Bisimilarity for pOCA is PSPACE-hard,
  even for unary (i.e., with only one action) and fully probabilistic pOCA,
   and for fixed initial configurations of the form $pXZ, qXZ$.
\end{theorem}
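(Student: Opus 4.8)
The plan is to reduce from a canonical PSPACE-complete problem to bisimilarity of a unary, fully probabilistic pOCA with fixed initial heads. A natural candidate is the emptiness/acceptance problem for linear-bounded automata, or equivalently the membership problem for a polynomial-space Turing machine, or the iterated-multiplication/quantified-boolean-style problem that underlies the PSPACE-hardness of bisimilarity for non-probabilistic one-counter automata in~\cite{BGJ:Concur10}. Since that paper already establishes PSPACE-hardness for (non-probabilistic) OCA, the cleanest route is to reuse its reduction skeleton, where the counter encodes the configuration (tape contents plus head position) of a polynomial-space machine and the control states drive the simulation; the only new ingredient we must supply is how to simulate the \emph{nondeterministic} branching of the non-probabilistic construction using \emph{probabilistic} branching under a \emph{single} action, since our hardness claim insists on unary and fully probabilistic pOCA.

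The key device for that translation is the pair of gadgets from Proposition~\ref{prop:ANDgadget} and Proposition~\ref{prop:ORgadget}. In the non-probabilistic setting, a state with two outgoing branches that must \emph{both} be matched, or \emph{one} of which must be matched, corresponds exactly to the AND- and OR-semantics these gadgets realise via bisimilarity. First I would take the OCA reduction of~\cite{BGJ:Concur10} and identify each point at which the verifier makes an existential (OR) or universal (AND) choice---these encode the alternation inherent in the PSPACE-hard source problem. Then I would replace each such choice point by the corresponding gadget, so that $s \sim s'$ holds precisely when the intended Boolean combination of the successor equivalences holds. Because every edge in the gadgets carries probability $1/2$ and the \emph{same} label, this simultaneously discharges the unary and fully-probabilistic requirements: the whole simulation proceeds under one action $a$, and at each head the distribution is the unique $d$ with $d(t_1)=d(t_2)=1/2$, so no two distributions share a head-action pair.

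The main technical work, and the step I expect to be the main obstacle, is checking that these gadgets can be implemented \emph{inside a one-counter automaton} without disturbing the counter-based encoding of the machine configuration. The gadget states $u_{12},u_{1'2},u_{12'},u_{1'2'}$ must be realised as pOCA heads $(u, X)$ acting on the same counter value, and the transitions $u_{12}\tran{a}t_1\mid t_2$ etc.\ must be push/pop/internal moves that leave the counter in the value each successor expects. I would arrange the reduction so that the AND/OR decision is taken at a counter value that the simulation is about to consume anyway, absorbing the constant-length gadget into the transition that advances the simulated machine; the premise $t_1\not\sim t'_2$ required by the AND-gadget is ensured by a parity/marker bit in the control state that makes the two branches structurally distinct (e.g.\ by forcing them to different terminating behaviours). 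Once the gadgets are correctly embedded, a routine induction on the simulated computation---matching the partition-refinement characterisation $\bigcap_n \mathop{\sim_n}=\mathop{\sim}$ from the preliminaries---shows that $pXZ\sim qXZ$ holds if and only if the source PSPACE instance is positive, which, together with the fact that the whole pOCA is constructible in polynomial time, yields PSPACE-hardness.
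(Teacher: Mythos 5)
Your central idea---using the AND- and OR-gadgets of Propositions~\ref{prop:ANDgadget} and~\ref{prop:ORgadget} to replace alternating/nondeterministic choice by unary, fully probabilistic branching---is precisely the new ingredient in the paper's proof, so your instinct about where the novelty lies is right. However, the reduction skeleton you propose to hang it on is mischaracterized and would fail. The PSPACE-hardness behind \cite{BGJ:Concur10} does \emph{not} encode the configuration (tape contents plus head position) of a polynomial-space machine into the counter; a single counter value cannot support reading and writing individual tape cells, and no known OCA hardness proof works that way. The actual source problem, both in the reduction of \cite{SrbaVisiblyPDA:2009} that the paper follows and in the paper itself, is emptiness of one-letter \emph{alternating} finite automata (1L-AFA), which is PSPACE-complete. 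There the counter value $n$ plays the role of the length of the one-letter input word, and the AFA's alternation is unwound level by level as the counter decreases, with the inductive invariant $qX^{n}Z\sim q'X^{n}Z$ if and only if $\neg\Acc(q,n)$ (note the De Morgan flip: since the invariant is about \emph{non}-acceptance, existential AFA states get AND-gadgets and universal ones get OR-gadgets). Your proposal never fixes a concrete source problem or a concrete encoding, so this gap cannot be waved away as routine.

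A second genuine omission: the theorem demands \emph{fixed} initial configurations $pXZ, qXZ$, while emptiness (or any acceptance question for the source problem) quantifies over all $n$. Your construction offers no mechanism to ask an ``for all $n$'' question starting from counter value $1$. The paper solves this with a dedicated pair of states $p,p'$ and transitions $pX\btran{a}d$, $p'X\btran{a}d'$ with $d(p,X^2)=d(q_0,\varepsilon)=d(r,X)=\frac{1}{3}$ and symmetrically for $p'$ with $q_0'$: these let the counter be pumped to any height before handing control to $q_0$ versus $q_0'$, so that $pXZ\sim p'XZ$ holds if and only if $q_0X^{n}Z\sim q_0'X^{n}Z$ for every $n$, i.e., if and only if the 1L-AFA accepts no $n$. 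Finally, your ``parity/marker bit'' for enforcing the AND-gadget's side condition $t_1\not\sim t'_2$ is left unspecified; the paper does this concretely with auxiliary heads $s_1X, s_2X$ whose self-loop probabilities differ ($0.5$/$0.5$ versus $0.4$/$0.6$), which guarantees non-bisimilarity while keeping the system unary and fully probabilistic. With the correct source problem, the pumping gadget, and a concrete realisation of the side condition, your outline would essentially become the paper's proof; without them it does not go through.
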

In combination with Theorem~\ref{thm-bisim-pOCA-inPspace} we obtain:
\begin{corollary}
 The bisimilarity problem for pOCA is PSPACE-complete.
\end{corollary}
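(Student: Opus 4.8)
The plan is to prove \PSPACE-hardness by reduction from the \emph{emptiness problem for alternating finite automata over a one-letter alphabet}, which is known to be \PSPACE-complete. An instance is an alternating automaton $\calA$ with state set $Q_\calA = Q_\exists \uplus Q_\forall$, a single input letter, a successor map $\delta\colon Q_\calA \to 2^{Q_\calA}$, an initial state $q_0$, and accepting states $F$. Writing $\Acc_0 = F$ and, for $i\ge 0$, $\Acc_{i+1} = \{q\in Q_\exists : \delta(q)\cap \Acc_i \ne \emptyset\}\cup\{q\in Q_\forall : \delta(q)\subseteq \Acc_i\}$, the word $a^m$ is accepted iff $q_0\in\Acc_m$, and non-emptiness asks whether $q_0\in\Acc_m$ for some $m$. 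Crucially, the shortest accepted word can be exponentially long in $|Q_\calA|$, so any faithful simulation must be able to count up to exponentially large values; this is exactly what the counter of a pOCA provides, and it is what rules out a polynomial-size finite-state encoding (whose bisimilarity would be in \PTime). The idea is to let the counter value play the role of the number of letters still to be read, to simulate a universal state by the AND-gadget of Proposition~\ref{prop:ANDgadget} and an existential state by the OR-gadget of Proposition~\ref{prop:ORgadget}, and to wrap the whole construction in an outer existential choice of the length~$m$.

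Concretely, I would build a unary, fully probabilistic pOCA whose control states comprise, for each AFA state $q$, a pair of ``evaluation'' states together with the auxiliary states needed to realise the gadgets; the counter symbol $X$ encodes the current level and $Z$ marks level~$0$. For a configuration at level $i>0$ corresponding to a state $q$, one AFA step is simulated by a gadget that (i) uses a short balanced tree of binary AND- or OR-gadgets to turn the at most $|Q_\calA|$-ary branching of $\delta(q)$ into nested binary gadgets at the same level, and (ii) performs a single net decrement of the counter (popping one $X$) when passing to the successor states, so that the successors are evaluated at level $i-1$. All gadget transitions carry the single action and split probability $1/2$ between their two targets, as required by Propositions~\ref{prop:ANDgadget} and~\ref{prop:ORgadget}, keeping the automaton unary and fully probabilistic. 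The base case is handled when the counter hits $Z$: a state $q$ at level $0$ is routed to a ``true'' pair if $q\in F$ and to a ``false'' pair otherwise, where I fix two sink control states $\top,\bot$ designed so that $\top X^j Z \not\sim \bot X^j Z$ for every $j$ (for instance, one is perpetually live while the other deadlocks), which simultaneously supplies the non-bisimilarity side condition $t_1\not\sim t_2'$ demanded by the AND-gadget at every level. Finally, from the fixed initial configurations $pXZ,qXZ$ an outer OR-gadget iteratively offers the choice ``stop and evaluate here'' versus ``push one more $X$ and continue'', whose composition realises the disjunction over all lengths $m$, so that $p$ and $q$ become the evaluation pair for $q_0$ together with the guessed length.

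Correctness would follow by induction on the counter value: using Propositions~\ref{prop:ANDgadget} and~\ref{prop:ORgadget} one shows that the two evaluation configurations for $q$ at level $i$ are bisimilar if and only if $q\in\Acc_i$, the base case being the definition $\Acc_0=F$ and the step case being exactly the AND/OR semantics of universal/existential states; composing with the outer existential over $m$ then gives $pXZ\sim qXZ$ iff $\calA$ is non-empty. Since the construction is clearly polynomial in $|\calA|$, this yields \PSPACE-hardness for unary, fully probabilistic pOCA with the stated fixed initial configurations. I expect the main obstacle to be the uniform discharge of the gadget side conditions across all (exponentially many) counter levels with only polynomially many control states: the pairs $t_1\not\sim t_2'$ and the $\top/\bot$ sinks must be provably non-bisimilar \emph{independently of the counter value}, and one must check that the auxiliary gadget states and the level-decrement bookkeeping never create spurious bisimulations between configurations meant to encode different truth values. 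Verifying this invariant---essentially that the only bisimulations of the constructed system are the intended ones---is the delicate part, whereas the gadget-by-gadget local reasoning is routine given Propositions~\ref{prop:ANDgadget} and~\ref{prop:ORgadget}.
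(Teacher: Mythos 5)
Your proposal only attempts the hardness half of the corollary; PSPACE-completeness also needs the upper bound, which in the paper is Theorem~\ref{thm-bisim-pOCA-inPspace} (a dedicated polynomial-space algorithm in the style of~\cite{BGJ:Concur10}, since the generic reduction of Section~\ref{sec-prob-to-nondet} only yields exponential space). That half you could simply cite, so the substantive question is whether your reduction from 1L-AFA emptiness works --- and it contains a genuine flaw, precisely at the point you flag as ``delicate''.

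The flaw is the outer ``disjunction over all lengths $m$''. You encode truth positively (evaluation pair for $q$ at level $i$ bisimilar iff $q$ accepts $a^i$) and then chain OR-gadgets so that the pair $(s_k,s_k')$ at level $k$ satisfies $s_k \sim s_k'$ iff ($q_0$ accepts $a^k$) $\lor$ $s_{k+1}\sim s_{k+1}'$. For the \emph{inner} levels this polarity is harmless, because the counter strictly decreases and you can induct on it; but the outer chain is not well-founded, and bisimilarity is a \emph{greatest} fixed point. An unbounded system of equivalences $X_k \Leftrightarrow A_k \lor X_{k+1}$ admits the all-true assignment as a solution no matter what the $A_k$ are, and bisimilarity realizes exactly that solution: let $R$ be the equivalence generated by all pairs $(s_k,s_k')$, $(u^k_{12},u^k_{12'})$, $(u^k_{1'2'},u^k_{1'2})$, where $u^k_{\cdot}$ are the intermediate states of the $k$-th OR-gadget. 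Then $R$ is a bisimulation even when \emph{every} disjunct fails: for instance $u^k_{12}$ moves with probability $1/2$ to the unprimed evaluation configuration and $1/2$ to $s_{k+1}$, while $u^k_{12'}$ moves with probability $1/2$ to the \emph{same} evaluation configuration and $1/2$ to $s_{k+1}'$, and these two distributions agree on every $R$-class; the checks for $(s_k,s_k')$ and $(u^k_{1'2'},u^k_{1'2})$ are analogous. Hence in your construction $pXZ \sim qXZ$ holds unconditionally and the reduction decides nothing. This is exactly why the paper encodes the \emph{complement}: it arranges $qX^nZ \sim q'X^nZ$ iff $\neg\Acc(q,n)$, so the outer quantifier becomes an infinite \emph{conjunction} (emptiness: for all $n$, no acceptance), which is the quantifier that coinduction does support, and De Morgan then swaps the gadgets relative to your assignment (AND-gadget for existential states, OR-gadget for universal ones). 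A secondary, fixable issue: the side condition $t_1\not\sim t_2'$ of Proposition~\ref{prop:ANDgadget} is not supplied by $\top/\bot$ sinks at level $0$; the paper discharges it at every application of the gadget by inserting intermediate states ($r_1,r_2,s_1,s_2$, with probability splits $0.5/0.5$ versus $0.4/0.6$). But the polarity of the outer quantifier is the fundamental error, and it cannot be patched while keeping your positive encoding.
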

\begin{proof}[Proof of Theorem~\ref{thm-pOCA-hard}]
We use a reduction from the emptiness problem for alternating
finite automata with a one-letter alphabet, known to be
PSPACE-complete~\cite{Holzer95,JancarSawaAFA:2007};
our reduction resembles the reduction in~\cite{SrbaVisiblyPDA:2009}
for (non-probabilistic) visibly one-counter automata.

A \emph{one-letter alphabet alternating finite automaton}, 1L-AFA, is a tuple
$A=(Q,\delta,q_0,F)$ where
$Q$ is the (finite) set of \emph{states},
 $q_0$
is the \emph{initial state}, $F\subseteq Q$ is the set of
\emph{accepting states}, and the \emph{transition
function} $\delta$ assigns to each $q\in Q$
either $q_1\land q_2$ or $q_1\lor q_2$, where $q_1,q_2\in Q$.

We define the predicate $\Acc\subseteq Q\times\N$
by induction on the second component (i.e. the length of a one-letter word);
$\Acc(q,n)$ means
``$A$ starting in $q$ accepts $n$'':
%
$\Acc(q,0)$ if and only if $q\in F$;
$\Acc(q,n{+}1)$ if and only if
either
$\delta(q)=q_1\land q_2$ and
we have both $\Acc(q_1,n)$ and $\Acc(q_2,n)$, or
 $\delta(q)=q_1\lor q_2$ and we have
$\Acc(q_1,n)$ or $\Acc(q_2,n)$.

The \emph{emptiness problem for 1L-AFA} asks, given a 1L-AFA $A$, if the set
$\{n\mid \Acc(q_0,n)\}$ is empty.

We reduce the emptiness of 1L-AFA to our problem.
We thus assume a 1L-AFA $(Q,\delta,q_0,F)$, and we construct a pOCA
$\Delta$ as
follows. $\Delta$ has $2|Q|+3$ `basic' states; the set of basic states
is $\{p,p',r\}\cup Q\cup Q'$ where
$Q'=\{q'\mid q\in Q\}$ is a copy of $Q$ and $r$ is a special dead
state.
Additional auxiliary states will be added to implement AND- and
OR-gadgets. $\Delta$ will have only one input
letter, denoted $a$, and will be fully probabilistic.

We aim to achieve $pXZ\sim p'XZ$ if and only if $\{n\mid \Acc(q_0,n)\}$ is
empty; another property will be that
\begin{equation}\label{eq:afapoca}
qX^{n}Z\sim q'X^{n}Z \textnormal{ if and only if } \neg\Acc(q,n).
\end{equation}
For each $q\in F$ we add a transition
$qZ \btran{a} d$ where $d(r,Z)=1$,
but $qZ$ is dead (i.e., there is no transition $qZ \btran{a}..$) if
$q\not\in F$;  $q'Z$ is dead for any $q'\in Q'$.
Both $rX$ and $rZ$ are dead as well.
Hence
(\ref{eq:afapoca}) is
satisfied for $n=0$.
Now we show (\ref{eq:afapoca}) holds for
$n>0$.

For $q$ with $\delta(q)=q_1\lor q_2$ we implement an
AND-gadget from Figure~\ref{fig:gadgets} (top) guaranteeing $qX^{n{+}1}Z\sim q'X^{n{+}1}Z$ if and only if
$q_1X^{n}Z\sim q_1'X^{n}Z$ and $q_2X^{n}Z\sim q_2'X^{n}Z$
(since $\neg\Acc(q,n{+}1)$ if and only if $\neg\Acc(q_1,n)$ and
$\neg\Acc(q_2,n)$):

We add rules
$qX \tran{} r_1X \mid r_2X$ (this is a shorthand for
$qX \btran{a} [r_1X \mapsto 0.5, r_2X\mapsto 0.5]$)
and $q'X \tran{} r'_1X \mid r'_2X$,
\\
and also $r_1X \tran{} q_1 \mid s_1X$,
$r_2X \tran{} q_2 \mid s_2X$,
 $r'_1X \tran{} q'_1 \mid s_1X$,
$r'_2X \tran{} q'_2 \mid s_2X$,
\\
and $s_1X\tran{0.5}s_1X$, $s_1X\tran{0.5}r$,
$s_2X\tran{0.4}s_2X$, $s_2X\tran{0.6}r$.
The intermediate states $r_1,r_2,r'_1,r'_2$, and $s_1,s_2$ serve to
implement the condition
$t_1\not\sim t'_2$ from Proposition~\ref{prop:ANDgadget}.

For $q$ with $\delta(q)=q_1\land q_2$ we (easily) implement an
OR-gadget from Figure~\ref{fig:gadgets} (bottom)
guaranteeing $qX^{n{+}1}Z\sim q'X^{n{+}1}Z$ if and only if
$q_1X^{n}Z\sim q_1'X^{n}Z$ or $q_2X^{n}Z\sim q_2'X^{n}Z$
(since $\neg\Acc(q,n{+}1)$ if and only if $\neg\Acc(q_1,n)$ or
$\neg\Acc(q_2,n)$).

To finish the construction, we add transitions
$pX\btran{a}d$ where $d(p,X^2)=d(q_0,\varepsilon)=d(r,X)=\frac{1}{3}$ and
$p'X\btran{a}d'$ where $d'(p',X^2)=d(q'_0,\varepsilon)=d(r,X)=\frac{1}{3}$;
the transitions added before guarantee that
$pX^{n+2}Z\not\sim q'_0X^nZ$ and
$q_0X^{n}Z\not\sim p'X^{n+2}Z$.
\end{proof}


\subsection{Bisimilarity of pvPDA is EXPTIME-hard}

In this section we prove the following:
\begin{theorem} \label{thm-pvPDA-hard}
 Bisimilarity for pvPDA is EXPTIME-hard,
  even for fully probabilistic pvPDA with $|\Sigmar| = |\Sigmai| = |\Sigmac| = 1$.
\end{theorem}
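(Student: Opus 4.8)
The plan is to reduce from the EXPTIME-complete problem of deciding acceptance for alternating Turing machines running in polynomial space, equivalently from the membership/emptiness problem for alternating linear-bounded automata, or directly from a suitable EXPTIME-complete game/reachability problem on alternating machines with an exponential configuration space. The key idea is that the visibly pushdown stack gives us access to a counter-like mechanism with exponentially many reachable stack heights, and the AND- and OR-gadgets of Proposition~\ref{prop:ANDgadget} and Proposition~\ref{prop:ORgadget} let us encode the alternation of the machine into probabilistic bisimilarity, exactly as in the pOCA case but now exploiting the richer stack. Concretely, I would mimic the structure of the reduction in~\cite{SrbaVisiblyPDA:2009} showing EXPTIME-hardness for (non-probabilistic) vPDA, but replace its nondeterministic AND/OR branching by the probabilistic gadgets developed in Section~\ref{sec-lower-bounds}, so that a single bisimilarity question captures the whole alternating computation.

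First I would fix an appropriate EXPTIME-complete source problem whose instances are naturally described by a pushdown-shaped alternating computation, for instance acceptance by an alternating automaton with an auxiliary pushdown store restricted to be visibly used, or the acceptance problem for alternating LBAs. I would then build a fully probabilistic pvPDA $\Delta$ over an input alphabet partitioned as $\Sigma = \Sigmar \cup \Sigmai \cup \Sigmac$ with $|\Sigmar| = |\Sigmai| = |\Sigmac| = 1$, so that each transition's stack effect (push, internal, pop) is forced by the single letter in its class. The control states of $\Delta$ would encode the states of the alternating machine together with the two copies (primed/unprimed) needed to make the gadgets work, and the stack would store the tape contents or the configuration word that the machine must process; pushing corresponds to writing a symbol, the internal action to local state changes, and popping to returning along the computation. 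Each alternating state with an $\land$-transition would be realised by an OR-gadget and each $\lor$-transition by an AND-gadget (matching the polarity inversion that arose already in the pOCA proof, where $\neg\Acc$ is the invariant being tracked), so that $pX\cdots \sim p'X\cdots$ holds exactly when the corresponding configuration is \emph{rejected}.

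The main steps, in order, are: (i) define $\Delta$ together with two distinguished initial configurations of the form $pX\beta Z \sim p'X\beta Z$; (ii) establish an invariant, analogous to~(\ref{eq:afapoca}), relating bisimilarity of paired configurations to the (non-)acceptance predicate of the alternating machine, proved by induction on the remaining computation length or stack height; (iii) verify the base case at dead/terminating configurations (empty stack or halting states), where acceptance is decided directly, using that empty-stack configurations are terminating in $\mathcal{S}(\Delta)$; (iv) push the induction through the $\land$ and $\lor$ cases by invoking the AND- and OR-gadget propositions, carefully threading the auxiliary states (the $s_i$-type states enforcing $t_1 \not\sim t_2'$) so that the gadget hypotheses are met; and (v) check throughout that every rule respects the visibly pushdown discipline for its class, which is the extra constraint not present in the pOCA proof. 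Finally I would confirm that $\Delta$ is constructible in polynomial time from the machine instance, yielding the hardness.

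The hard part will be respecting the visibly pushdown constraint while simultaneously keeping the automaton fully probabilistic and keeping each of $|\Sigmar|, |\Sigmai|, |\Sigmac|$ equal to $1$: since the stack action is dictated solely by which class the unique letter belongs to, I cannot use distinct letters to steer different branches, so all the branching logic of the gadgets must be carried entirely in the probabilities and the control states, with the stack movements synchronised in lockstep between the primed and unprimed copies. In particular, the three-phase flavour of the gadgets must be arranged so that at every step the two configurations being compared are at matching stack heights and perform push/internal/pop in the same class, otherwise a trivial mismatch in the visible action would distinguish them for the wrong reason. Getting the stack bookkeeping to encode the exponentially large configuration space of the alternating machine while the gadgets faithfully implement alternation — and proving the resulting invariant cleanly — is where the real work lies; I expect the construction itself to be the delicate part and the inductive correctness proof to then follow the template already set by the pOCA argument.
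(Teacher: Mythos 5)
There is a genuine gap: your proposal defers the one step that carries the entire content of the theorem. Everything you write is conditional --- ``I would fix an appropriate EXPTIME-complete source problem'', ``I would then build a fully probabilistic pvPDA'', ``establish an invariant \dots proved by induction'' --- but the actual construction encoding an alternating polynomial-space computation into a visibly pushdown stack is never given, and you yourself concede that this is ``where the real work lies''. That work is not routine: a stack is not a tape. An alternating LBA or poly-space ATM reads and writes arbitrary cells of its tape, while a pvPDA can only touch the top of its stack, and under the visibly discipline with $|\Sigmar|=|\Sigmai|=|\Sigmac|=1$ every push/pop is exposed to the bisimulation game, so the two compared configurations must stay in lockstep. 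Bridging this mismatch is exactly the ``rather tedious'' reduction of~\cite{KuceraM10} that the paper explicitly declines to redo; your plan, as stated, amounts to a promise to carry out that reduction (or to re-prove the hardness of pushdown games) without exhibiting it. The parts you do make concrete --- using the AND-/OR-gadgets of Propositions~\ref{prop:ANDgadget} and~\ref{prop:ORgadget} to encode alternation, tracking a negated acceptance predicate as in the pOCA proof, and synchronising stack heights --- are correct ingredients, but they are the easy ones.

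The paper avoids your difficulty by choosing a different source problem: the reachability game on pushdown processes, shown EXPTIME-complete in~\cite{Walukiewicz2001}. There the pushdown structure is already part of the input, so no tape-to-stack encoding is needed at all; the reduction merely doubles the control states into primed and unprimed copies, makes dead configurations distinguishable, and implements Player~0's choices by the OR-gadget and Player~1's choices by the AND-gadget (with an auxiliary state $z$ enforcing the side condition $t_1 \not\sim t_2'$ of the AND-gadget). The visibly constraint is then inherited for free, since both copies perform the stack actions of the same underlying PDA rule. To repair your proof you should either switch to this source problem, or actually carry out the configuration-pushing simulation of an alternating machine --- in which case the inductive invariant and the gadget bookkeeping you sketch would still need to be checked against a concrete rule set, not assumed to ``follow the template'' of the pOCA argument.
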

In combination with Theorem~\ref{thm-upper-bounds-visibly} we obtain:
\begin{corollary}
 The bisimilarity problem for pvPDA is EXPTIME-complete.
\end{corollary}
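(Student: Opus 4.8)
The plan is to reduce from an EXPTIME-complete alternating problem over a pushdown structure, namely the winning problem for reachability games on pushdown systems (equivalently, the emptiness problem for alternating pushdown automata). This is the natural pushdown analogue of the one-counter/1L-AFA reduction used for Theorem~\ref{thm-pOCA-hard}: there a single counter drove a PSPACE-complete alternating problem, whereas here a full stack drives an EXPTIME-complete one. Given such a game, with states partitioned into \emph{existential} and \emph{universal} and transitions that push, pop, or leave the stack unchanged, I would build a fully probabilistic pvPDA together with two families of configurations, one pair per game configuration $q\gamma$, such that the two members of a pair are bisimilar if and only if the player who wants to reach the target \emph{loses} from $q\gamma$.

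The alternation is simulated using exactly the gadgets of Propositions~\ref{prop:ANDgadget} and~\ref{prop:ORgadget}, with the polarity dictated by the ``losing'' reading of $\sim$. At an \emph{existential} state the player wins iff some successor is winning, so losing is a conjunction over successors; hence I would attach an AND-gadget, making the pair bisimilar iff \emph{all} successor pairs are bisimilar. Dually, at a \emph{universal} state I would attach an OR-gadget. This is precisely the AND-for-$\lor$, OR-for-$\land$ correspondence already used in the proof of Theorem~\ref{thm-pOCA-hard}. The base case is handled as there: at the bottom of the stack a target (winning) configuration is made non-bisimilar by giving one side an enabled transition and the other a dead head, while a non-target dead configuration yields two dead, hence bisimilar, heads. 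Correctness then follows by induction on the game, mirroring the induction on $n$ behind~(\ref{eq:afapoca}).

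The genuinely new work, compared with Theorem~\ref{thm-pOCA-hard}, is fitting all of this into the \emph{visibly} discipline with $|\Sigmar|=|\Sigmai|=|\Sigmac|=1$ while staying \emph{fully probabilistic}. I would assign the single internal letter to all gadget (branching) transitions, which do not touch the stack, and use the single call letter and the single return letter for the push and pop moves of the game. Since both $\frac12$-branches of a gadget carry the same label, they must perform the same stack operation; I would therefore normalise the game so that each round first branches to its successors by an internal gadget step and only afterwards performs the (possibly distinct) push/pop/internal stack operations as deterministic, correctly-typed single transitions from the intermediate states. Full probabilism is automatic because every choice is realised by a $\frac12$-$\frac12$ gadget, so for each head and action there is at most one outgoing distribution; the auxiliary dead states $s_1,s_2$ of the pOCA construction can be reused to enforce the side condition $t_1\not\sim t'_2$ required by the AND-gadget. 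The reduction adds only a constant number of states and rules per game transition, so it runs in polynomial time.

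The main obstacle I anticipate is exactly this reconciliation: the gadgets demand matched actions and built-in inequivalences $t_1\not\sim t'_2$, whereas the visibly constraint with singleton alphabets rigidly couples the two stacks and forbids using distinct labels to separate the two stack-operation types that a single game step may require. Discharging it amounts to carefully decomposing each game step into an internal branching phase followed by a typed stack phase, and then verifying that the induced coupling of stack heights never lets the two processes drift apart except through the intended inequivalences; once this invariant is set up, the inductive equivalence between bisimilarity and losing in the game is routine.
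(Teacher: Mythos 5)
Your reduction reconstructs, in essentially the same way, the paper's proof of Theorem~\ref{thm-pvPDA-hard}: a reduction from reachability games on pushdown processes (EXPTIME-hard by~\cite{Walukiewicz2001}), with the AND-gadget of Proposition~\ref{prop:ANDgadget} placed at the states of the player trying to reach the target and the OR-gadget of Proposition~\ref{prop:ORgadget} at the other player's states, the gadget branching performed under the single internal action, the actual push/pop/internal moves performed afterwards as deterministic correctly-typed transitions (the paper normalises so that every configuration with two successors has both successors obtained by rewriting the top stack symbol), and a fresh dead head (the state $z$ in the paper) used to enforce the side condition $t_1\not\sim t'_2$ of the AND-gadget. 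So the lower-bound half of your argument is sound and matches the paper's.

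The genuine gap is that EXPTIME-\emph{completeness} also requires membership in EXPTIME, and your proposal never addresses it — not even by invoking Theorem~\ref{thm-upper-bounds-visibly}, which is exactly what the paper's corollary combines with the hardness theorem. This upper bound is not routine: the generic probabilistic-to-nondeterministic reduction of Section~\ref{sec-prob-to-nondet} incurs an exponential blowup, so composing it with the known EXPTIME algorithm for (non-probabilistic) vPDA bisimilarity yields only a 2EXPTIME bound. The paper instead develops a new forcing-relation-based EXPTIME algorithm for vPDA and then shows that the blowup produced by the reduction is confined to the intermediate states introduced by the three-step simulation, which the new algorithm can handle in a pre-processing phase, keeping the overall running time single-exponential in the size of the original pvPDA. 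Without this argument (or at least a citation of the membership theorem), what you have proved is EXPTIME-hardness, not EXPTIME-completeness.
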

It was shown in~\cite{SrbaVisiblyPDA:2009} that bisimilarity for (non-probabilistic) vPDA is EXPTIME-complete.
The hardness result there follows by observing that the proof given in~\cite{KuceraM10} for general PDA works in fact even for vPDA.
Referring to the conference version of~\cite{KuceraM10}, 
 it is commented in~\cite{SrbaVisiblyPDA:2009}: ``Though conceptually elegant, the technical details of the reduction are rather tedious.''
For those reasons we give a full reduction from the problem of determining the winner in a reachability game on pushdown processes.
This problem was shown EXPTIME-complete in~\cite{Walukiewicz2001}.
Our reduction proves Theorem~\ref{thm-pvPDA-hard}, i.e., for unary and fully probabilistic pvPDA,
 and at the same time provides a concise proof for (non-probabilistic) vPDA.
\begin{proof}[Proof of Theorem~\ref{thm-pvPDA-hard}]
Let $\Delta = (Q,\Gamma,\{a\},\mathord{\btran{}})$ be a unary non-probabilistic PDA with
 a control state partition $Q = Q_0 \cup Q_1$ and an initial configuration $p_0 X_0$.
We call a configuration $p X \alpha$ \emph{dead} if it has no successor configuration,
 i.e., if $\Delta$ does not have a rule with $p X$ on the left-hand side.
Consider the following game between Player~0 and Player~1 on the LTS~$\mathcal{S}(\Delta)$ induced by~$\Delta$:
The game starts in~$p_0 X_0$.
Whenever the game is in a configuration $p \alpha$ with $p \in Q_i$ (where $i \in \{0,1\}$),
 Player~$i$ chooses a successor configuration of~$p \alpha$ in~$\mathcal{S}(\Delta)$.
The goal of Player~$1$ is to reach a dead configuration; the goal of Player~$0$ is to avoid that.
It is shown in~\cite[pp.~261--262]{Walukiewicz2001} that determining the winner in that game is EXPTIME-hard.

W.l.o.g.\ we can assume that each configuration has at most two successor configurations,
 and that no configuration with empty stack is reachable.
We construct a fully probabilistic pvPDA $\bar \Delta = (\bar Q, \Gamma, \{\ar, \ai, \ac\}, \mathord{\ctran{}})$
 such that the configurations $p_0 X_0$ and~$p_0' X_0$ of~$\bar \Delta$ are bisimilar if and only if Player~$0$ can win the game.
For each control state $p \in Q$ the set~$\bar Q$ includes $p$ and a copy~$p'$.

For each $p X \in Q \times \Gamma$, if $p X$ is dead in~$\Delta$, we add a rule $p X \ctran{\ai,1} p X$ in~$\bar \Delta$,
 and a rule $p' X \ctran{\ai,1} z X$ where $z \in \bar Q$ is a special control state not occurring on any left-hand side.
This ensures that if $p X$ is dead in~$\Delta$ (and hence Player~$1$ wins), then we have $p X \not\sim p' X$ in~$\bar \Delta$.

For each $p X \in Q \times \Gamma$ that has in~$\Delta$ a single successor configuration~$q \alpha$,
 we add rules $p X \ctran{a,1} q \alpha$ and $p' X \ctran{a,1} q' \alpha$,
 where $a = \ar, \ai, \ac$ if $|\alpha| = 0, 1, 2$, respectively.

For each $p X \in Q \times \Gamma$ that has in~$\Delta$ two successor configurations,
 let $p_1 \alpha_1$ and $p_2 \alpha_2$ denote the successor configurations.
W.l.o.g.\ we can assume that $\alpha_1 = X_1 \in \Gamma$ and $\alpha_2 = X_2 \in \Gamma$.
\begin{itemize}
 \item If $p \in Q_0$ we implement an OR-gadget from Figure~\ref{fig:gadgets}:
  let $(p_1X_1p_2X_2), (p_1'X_1p_2'X_2), (p_1X_1p_2'X_2), (p_1'X_1p_2X_2) \in \bar Q$ be fresh control states,
  and add rules $p X \ctran{} (p_1X_1p_2X_2) X \mid (p_1'X_1p_2'X_2) X$
   (this is a shorthand for $p X \ctran{\ai,0.5} (p_1X_1p_2X_2) X$ and $p X \ctran{\ai,0.5} (p_1'X_1p_2'X_2) X$)
            and  $p' X \ctran{} (p_1X_1p_2'X_2) X \mid (p_1'X_1p_2X_2) X$
  as well as
   $(p_1X_1p_2X_2) X \ctran{} p_1 X_1 \mid p_2 X_2$ and
   $(p_1'X_1p_2'X_2) X \ctran{} p_1' X_1 \mid p_2' X_2$ and
   $(p_1X_1p_2'X_2) X \ctran{} p_1 X_1 \mid p_2' X_2$ and
   $(p_1'X_1p_2X_2) X \ctran{} p_1' X_1 \mid p_2 X_2$.
 \item If $p \in Q_0$ we implement an AND-gadget from Figure~\ref{fig:gadgets}:
  let $(p_1X_1), (p_1'X_1), (p_2X_2), (p_2'X_2) \in \bar Q$ be fresh control states,
  and add rules $p X \ctran{} (p_1X_1) X \mid (p_2X_2) X$
            and $p' X \ctran{} (p_1'X_1) X \mid (p_2'X_2) X$
  as well as
   $(p_1X_1) X \ctran{\ai,1} p_1 X_1$ and
   $(p_1'X_1) X \ctran{\ai,1} p_1' X_1$ and
   $(p_2X_2) X \ctran{} p_2 X_2 \mid z X$ and
   $(p_2'X_2) X \ctran{} p_2' X_2 \mid z X$.
  Here, the transitions to~$z X$ serve to implement the condition $t_1\not\sim t'_2$ from Proposition~\ref{prop:ANDgadget}.
\end{itemize}
An induction argument now easily establishes that
 $p_0 X_0 \sim p_0' X_0$ holds in~$\bar\Delta$ if and only if Player~$0$ can win the game in~$\Delta$.

We remark that exactly the same reduction works for non-probabilistic vPDA,
 if the probabilistic branching is replaced by nondeterministic branching.
\end{proof}

\bigskip
\noindent
{\bf Acknowledgements.\quad}
The authors thank anonymous referees for their helpful feedback.
Vojt\v{e}ch Forejt is supported by a Newton International Fellowship of the Royal Society.
Petr Jan\v{c}ar is supported by the Grant Agency of the Czech Rep.
(project GA\v{C}R:P202/11/0340); his short visit at Oxford was also
supported by ESF-GAMES grant no. 4513.
Stefan Kiefer is supported by the EPSRC.

\bibliographystyle{plain} 
\bibliography{db}

\iftechrep{
\newpage
\appendix
\section{Proofs omitted from Section~\ref{sec-prob-to-nondet}}
In this section we present proofs of some claims from Section~\ref{sec-prob-to-nondet}

\subsection{Proof of Lemma~\ref{lem:nondet-constr-correct}}\label{app:nondet-constr-correct}
Lemma~\ref{lem:nondet-constr-correct} follows immediately from
the following lemma.
\begin{lemma}
 For all configurations $qX\gamma$ and $rY\delta$ of $\Delta$ we have $qX\gamma\sim_n rY\delta$ in $\Delta$ if and only if $qX\gamma\sim'_{3\cdot n} rY\delta$ in $\Delta'$.
\end{lemma}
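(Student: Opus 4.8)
The plan is to prove the quantitative correspondence
\[
qX\gamma\sim_n rY\delta \text{ in } \Delta
\iff
qX\gamma\sim'_{3n} rY\delta \text{ in } \Delta'
\]
by induction on $n$, exploiting the three-layer structure of $\Delta'$: from a genuine head $qX$ one must pass through an intermediate distribution-symbol $\stacksymba{d}$ (via an $a$-transition), then a subset-symbol $\stacksymba{T}$ (via a $w\in W$ transition), and only after a $\#$-transition does one return to a genuine configuration of $\Delta$. The base case $n=0$ is trivial since $\sim_0$ and $\sim'_0$ both relate all states. For the inductive step I would assume the claim for $n$ and prove it for $n+1$, matching one $\Delta$-step against the corresponding block of three $\Delta'$-steps.

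For the forward direction, suppose $qX\gamma\sim_{n+1}rY\delta$ in $\Delta$ and consider any first $\Delta'$-move from $qX\gamma$, which must be $qX\gamma\ctran{a}q\stacksymba{d}\gamma$ arising from some rule $qX\btran{a}d$. By the definition of $\sim_{n+1}$ there is a matching $rY\btran{a}d'$ with $d$ and $d'$ being $\sim_n$-equivalent, giving $rY\delta\ctran{a}r\stacksymba{d'}\delta$. I then need to verify that these two successor heads are related at level $3n+2$. At the second layer, a move $q\stacksymba{d}\gamma\ctran{w}q\stacksymba{T}\gamma$ (where $d(T)\ge w$) must be matched by $r\stacksymba{d'}\delta\ctran{w}r\stacksymba{T'}\delta$; here I would use Lemma~\ref{lem:bisim-as-game} together with the $\sim_n$-equivalence of $d,d'$ to choose $T'$ as the $\sim_n$-saturation of $T$ (intersected with $\support{d'}$), so that $d'(T')\ge d(T)\ge w$ and $T,T'$ meet the same $\sim_n$-classes. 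At the third layer, a $\#$-move from $q\stacksymba{T}\gamma$ to some $q'\alpha\gamma$ with $q'\alpha\in T$ is matched by a $\#$-move to some $r'\alpha'\delta$ with $r'\alpha'\in T'$ in the same $\sim_n$-class, and the inductive hypothesis (applied to configurations that are $\sim_n$-related, hence $\sim'_{3n}$-related) closes the argument, showing the two original configurations are $\sim'_{3n+3}$-related. The converse direction reverses this reasoning: given $\sim'_{3n+3}$-equivalence in $\Delta'$, the only available first moves are the $a$-labelled ones, and their matching forces $\sim_n$-equivalence of the underlying distributions via the intermediate subset- and element-moves, recovering $\sim_{n+1}$ in $\Delta$.

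The main obstacle is the middle layer, where one must show that the ability to match every $w$-labelled subset-move is exactly equivalent to $R$-equivalence of the distributions in the sense used by $\sim$. This is precisely the content of Lemma~\ref{lem:bisim-as-game}: the condition $d(A)\le d'(R(A))$ for all $A$ is what lets a subset $T$ of mass $\ge w$ be matched by a subset $T'$ of mass $\ge w$ that covers the same equivalence classes. Care is needed because the relation at this stage is $\sim_n$, which is an equivalence, so the saturation $R(T)$ is well-defined and Lemma~\ref{lem:bisim-as-game} applies directly; the delicate point is ensuring the chosen $T'$ both has sufficient mass and hits exactly the right classes, and that the resulting element-level $\#$-successors remain in corresponding classes so the induction hypothesis can be invoked at level $3n$. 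A secondary technical point is bookkeeping the common suffixes $\gamma,\delta$, which ride along passively through all three layers and do not affect the class structure, so that $\sim_n$-relatedness of heads lifts to $\sim_n$-relatedness of full configurations with these suffixes appended.
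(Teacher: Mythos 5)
Your proposal is correct and takes essentially the same route as the paper's own proof: induction on $n$, matching one step of $\Delta$ against the three-layer block (action, subset, element) in $\Delta'$, with Lemma~\ref{lem:bisim-as-game} applied at $R=\mathord{\sim_n}$ — choosing $T'$ as the $\sim_n$-saturation of $T$ restricted to the support of the matched distribution — doing the work at the subset layer in both directions. The points you flag as delicate (sufficient mass of $T'$, hitting exactly the same classes, and the passive suffixes $\gamma,\delta$) are exactly the ones the paper's proof handles, and your treatment of them is sound.
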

\begin{proof}
In what follows, given a distribution $d=[\alpha_1 \mapsto x_1, \ldots, \alpha_n\mapsto x_n]$, we use
$d^w$ to denote the distribution $[\alpha_1w \mapsto x_1, \ldots, \alpha_nw\mapsto x_n]$. Also, we use $\sim'$ to denote the relation $\sim$ of $\Delta'$,
to distinguish it from the relation $\sim$ of $\Delta$.

Let us start with the direction $\Rightarrow$ of the lemma. For $n=0$
the claim obviously holds. Assume it holds for all numbers lower than
$n$.  Let $qX\gamma$ and $rY\delta$ be configurations of $\Delta$ such
that $qX\gamma\sim_n rY\delta$.  W.l.o.g. let us pick any transition
$qX \ctran{a} q\stacksymba{d_1}$ where $d_1=[q_1\beta_1 \mapsto x_1,
  \ldots, q_n\beta_n\mapsto x_n]$.  There must be a transition
$rY\ctran{a} r\stacksymba{d_2}$ where $d_2=[r_1\beta_1 \mapsto y_1,
  \ldots, r_m\beta_m\mapsto y_m]$ such that $d_1^\gamma$ and
$d_2^\delta$ are $\sim_{n-1}$-equivalent. Let $q\stacksymba{d_1}
\ctran{x} q\stacksymba{\{q_{i_1}\alpha_{i_1}, \ldots
  q_{i_k}\alpha_{i_k}\}}$ be an arbitrary rule with $q\stacksymba{d_1}$
on the left hand side (the case of $r\stacksymba{d_2}$ is similar).
For the set $T=\{q_{i_1}\alpha_{i_1}\gamma,\ldots
q_{i_k}\alpha_{i_k}\gamma\}$, we have $x \le d_1(T)$ and there must be
a set
$T'=\{r_{j_1}\beta_{j_1}\delta,\ldots,r_{j_\ell}\beta_{j_\ell}\delta\}$
satisfying the conditions of Lemma~\ref{lem:bisim-as-game} such that
$d_2(T') \ge d_1(T) \ge x$. Hence there is an action
$r\stacksymba{d_2} \ctran{x} r\stacksymba{\{r_{j_1}\beta_{j_1}, \ldots
  r_{j_\ell}\beta_{j_\ell}\}}$. Because $T$ and $T'$ were chosen to
satisfy the conditions of Lemma~\ref{lem:bisim-as-game}, for any
$r\stacksymba{\{r_{j_1}\beta_{j_1}, \ldots
  r_{j_\ell}\beta_{j_\ell}\}}\delta\tran{b}r_j\beta_j \delta$ (these are
the only actions available) there is an action
$q\stacksymba{\{q_{i_1}\alpha_{i_1}, \ldots q_{i_k}\alpha_{i_k}\}}
\tran{b}q_i\alpha_i \gamma$ such that $r_j\beta_j \delta\sim_n
q_i\alpha_i \gamma$, and vice versa.

Now let us analyse $\Leftarrow$. For $n=0$ the claim obviously holds. Assume it holds for all numbers lower than $n$.
Let $qX\gamma$ and $rY\delta$ be configurations of $\Delta$ such that $qX\gamma\sim'_{3\cdot n} rY\delta$ in $\Delta'$. Let $qX\gamma\btran{a}d$ be arbitrary
rule, then there is a transition $qX\gamma\tran{a} q\stacksymba{d}\gamma$ in $\Delta'$ and  $rY\delta\tran{a} r\stacksymba{d'}\delta$
such that $q\stacksymba{d}\gamma \sim'_{3\cdot n -1} r\stacksymba{d'}\delta$. There is also a rule $rY\delta\btran{a}d'$, so to finish
the proof it suffices to see that $d^\gamma$ and $(d')^\delta$ are $\sim_{n-1}$-equivalent.
Let $T=\{q_{i_1}\alpha_{i_1}\gamma, \ldots, q_{i_k}\alpha_{i_k}\gamma\} \subseteq \support{d^\gamma}$ be arbitrary
(for the subsets of $\support{(d')^\gamma}$ the proof is similar), and let $x=d^\gamma(T)$. There is a transition
$q\stacksymba{d}\gamma \tran{x} q\stacksymba{\{q_{i_1}\alpha_{i_1}, \ldots, q_{i_k}\alpha_{i_k}\}}\gamma$, and hence a transition
$r\stacksymba{d'}\delta \tran{x} r\stacksymba{\{r_{j_1}\beta_{j_1}, \ldots, r_{j_\ell}\beta_{j_k}\}}\delta$ such that
\begin{equation}\label{eq:bisim}
q\stacksymba{\{q_{i_1}\alpha_{i_1}, \ldots, q_{i_k}\alpha_{i_k}\}}\gamma
\sim'_{3\cdot n - 2}
r\stacksymba{\{r_{j_1}\beta_{j_1}, \ldots, r_{j_\ell}\beta_{j_k}\}}\delta
\end{equation}
We put $T'=\{r_{j_1}\beta_{j_1}\delta, \ldots, r_{j_\ell}\beta_{j_\ell}\delta\}$. We show that $T$ and $T'$ satisfy the conditions
from Lemma~\ref{lem:bisim-as-game} for the relation $\sim_{n-1}$.
First, due to the construction of rules available under $x$ we have $d^\gamma(T) \le (d')^\delta(T')$.
Further, for an arbitrary element $q_{i}\alpha_{i}\gamma$ there is
a transition
$q\stacksymba{\{q_{i_1}\alpha_{i_1}, \ldots, q_{i_k}\alpha_{i_k}\}}\gamma \tran{b} q_{i}\alpha_{i}\gamma$,
and so due to Equation~\ref{eq:bisim} there must be a transition
$r\stacksymba{\{r_{j_1}\beta_{j_1}, \ldots, r_{j_\ell}\beta_{j_k}\}}\delta \tran{b} r_j\beta_j \delta$
such that $q_{i}\alpha_{i}\gamma \sim'_{3n-3} r_j\beta_j \delta$. Using the induction hypothesis we get $q_{i}\alpha_{i}\gamma \sim_{n-1} r_j\beta_j \delta$,
which finishes the proof.
\end{proof}
\subsection{Analysis of the size of $\Delta'$}\label{app:analysis-complexity}

Let us analyse the size of $\Delta'$. Let $|\varrho|$ be the
number of rules of $\Delta$,
and let $m$ be the maximal size
of the support of a distribution assigned by some rule of $\Delta$.
The size of $|\Gamma'|$
is at most $|\Gamma| + |\varrho| + |\varrho|\cdot 2^m$, the size of $\Sigma'$ is at most
$|\Sigma| + |R| + 1$, and the number of rules (in $\ctran{}$) under an action $a\in\Sigma$
is at most $|\varrho|$, under an action $x\in W$ it is at most $|\varrho| \cdot 2^m$, where $|W| \le |\varrho|\cdot 2^m$.
The number of rules under the action $\#$ is at most $|\varrho|\cdot 2^m \cdot m$. Hence the
size of $\Delta'$ is exponential in the size of $\Delta$, but polynomial when the size of the support
of distributions assigned by rules is fixed. Obviously, the construction can be done in time exponential (or polynomial, respectively)
in the size of $\Delta$.

\section{Proofs Omitted from Section~\ref{sec-upper-bounds}} \label{app-upper-bounds}

\subsection{Proof of Theorem~\ref{thm-upper-bounds-visibly}}

We prove the following theorem from the main body of the paper:

\begin{qtheorem}{\ref{thm-upper-bounds-visibly}}
\stmtthmupperboundsvisibly
\end{qtheorem}

\begin{proof}
The proof is structured as follows.  First we show that bisimilarity
for (non-probabilistic) vPDA is in EXPTIME, thus reproving a result
from~\cite{SrbaVisiblyPDA:2009} via a different method.  Then we show
that, although the reduction from Section~\ref{sec-prob-to-nondet}
yields an exponential blow-up in translating from pvPDA to vPDA, our
new algorithm for deciding bisimilarity on vPDA can still be made to
run in single-exponential time in the size of the original pvPDA.

Let $p_0 \alpha_0$ and $q_0 \beta_0$ be the given initial configurations.
W.l.o.g.\ we assume that $\alpha_0 = X_0 \in \Gamma$ and $\beta_0 = Y_0 \beta'$ with $Y_0 \in \Gamma$ and $\beta' \in \Gamma^*$.
Recall that bisimulation in a labelled transition system can be naturally characterised by a \emph{bisimulation game}
 between two players, Attacker and Defender.
Two states in a labelled transition system are bisimilar if and only if Defender has a winning strategy, see e.g.\ \cite{SrbaVisiblyPDA:2009}.

We define some notation.
For relations $R \subseteq U \times 2^V$ and $S \subseteq V \times 2^W$, we define $(R \bullet S) \subseteq U \times 2^W$ by $R \circ \lift{S}$,
 where $\lift{S} := \{ ( \{ v_1, \ldots, v_k\}, \bigcup_{i=1}^k A_i ) \mid k \ge 0 \ \land \ (v_i,A_i) \in S\} \subseteq 2^V \times 2^W$
 and $\mathord{\circ}$ stands for the join of two relations.
Note that $\emptyset \mathrel{\lift{S}} \emptyset$, hence $u \mathrel{R} \emptyset$ implies $u \mathrel{(R \bullet S)} \emptyset$.
To avoid notational clutter in the following,
 if $C$ and $D$ are sets with $c \in C$ and $d \in D$, we often write $C D$ instead of $C \times D$ and $c d$ instead of $(c,d)$.

For finite sets of configurations $C, C' \subseteq Q \Gamma^*$ we call a relation
 $F \subseteq C C \times 2^{C' C'}$ a \emph{$(C,C')$-forcing relation}
 if  $c d \mathrel{F} S$ implies that Attacker, starting in~$c d$,
  can play so that he either wins or reaches a configuration in~$S$ (Defender may choose which configuration in~$S$).
If $F$ is a $(C,C')$-forcing relation, then 
 \[
  F_{/\Gamma} := \{ ( c X d Y , \{ c'_{1} X d'_{1} Y, \ldots, c'_{k} X d'_{k} Y \} ) \mid
   X,Y \in \Gamma \ \land \
     ( c   d , \{ c'_{1}   d'_{1}  , \ldots, c'_{k}   d'_{k} \} ) \in F  \}
 \]
 is a $(C \Gamma, C' \Gamma)$-forcing relation.
If $F$ is a $(C,C')$-forcing relation and $F'$ is a $(C',C'')$-forcing relation, then $F \bullet F'$ is a $(C,C'')$-forcing relation.
The union of $(C,C')$-forcing relations is a $(C,C')$-forcing relation,
 so there is a largest $(C,C')$-forcing relation.
We have that $p_0 X_0 \not\sim q_0 Y_0 \beta'$ holds if and only
 \[
  (p_0 X_0 q_0 Y_0) \mathrel{\hat F} \{p q \in Q Q \mid q \beta' \text{ has an outgoing transition} \}
 \]
 holds for the largest $(Q \Gamma, Q)$-forcing relation~$\hat F$.
(In particular, if $\beta' = \varepsilon$, then Attacker wins if and only if $(p_0 X_0 q_0 Y_0) \mathrel{\hat F} \emptyset$ holds.)
Hence it suffices to compute~$\hat F$ in exponential time.

For each $a \in \Sigmac$ we define a ``local'' $(Q \Gamma, Q \Gamma \Gamma)$-forcing relation~$\se{a}$ by
\begin{equation*} \label{eq:non-minimal}
\begin{aligned}
 (p X q Y) \mathrel\se{a} A \quad\Longleftrightarrow
  & \quad \exists\ p X \btran{a} p' X' X'' : A \supseteq \{ p' X' X'' q' Y' Y'' \mid q Y \btran{a} q' Y' Y'' \} \\
  & \lor \exists\  q Y \btran{a} q' Y' Y'' : A \supseteq \{ p' X' X'' q' Y' Y'' \mid p X \btran{a} p' X' X'' \}\,.
\end{aligned}
\end{equation*}
For $a \in \Sigmai$ and $a \in \Sigmar$ we analogously define local $(Q \Gamma, Q \Gamma)$-
 and $(Q \Gamma, Q)$-forcing relations $\se{a}$, respectively.
Those forcing relations can be computed in exponential time.
Let $\bar F$ be the least solution of the following equation system:
\begin{align*}
 F = \bigcup_{a \in \Sigmar} \se{a} \quad\cup\quad \bigcup_{a \in \Sigmai} \se{a} \bullet F \quad\cup\quad
   \bigcup_{a \in \Sigmac} \se{a} \bullet F_{/\Gamma} \bullet F\;.
\end{align*}
The least fixed point~$\bar F$ can be computed by a simple Kleene iteration starting from $F = \emptyset$.
The iteration terminates after at most $|Q \Gamma Q \Gamma \times 2^{Q Q}|$ rounds, each of which takes at most exponential time.
It is not hard to see that $\bar F$ is the largest $(Q \Gamma, Q)$-forcing relation.
It follows that bisimilarity for (non-probabilistic) vPDA can be decided in exponential time.


\newcommand{\hashr}{\#_r}
\newcommand{\hashi}{\#_{\mathit int}}
\newcommand{\hashc}{\#_c}

Now we consider a (probabilistic) pvPDA $\Delta = (Q,\Gamma,\Sigma,\mathord{\btran{}})$ with action partition $\Sigma = \Sigmar \cup \Sigmai \cup \Sigmac$.
We use essentially the reduction from Section~\ref{sec-prob-to-nondet}
 to compute a (non-probabilistic) vPDA $\Delta' = (Q,\Gamma',\Sigma',\mathord{\ctran{}})$,
 but we need to adapt it slightly to preserve ``visibly-ness'':
Instead of the action~$\#$ we need three actions $\hashr \in \Sigmar'$ and $\hashi \in \Sigmai'$ and $\hashc \in \Sigmac'$ in~$\Delta'$.
This change does not affect the correctness of the reduction.
Observe that $\Sigmar' = \{\hashr\}$ and $\Sigmac' = \{\hashc\}$ and $\Sigmai' = \Sigma \cup W \cup \{\hashi\}$.
For each $a \in \Sigmac$ we define a local $(Q \Gamma, Q \Gamma \Gamma)$-forcing relation~$\se{a}$ in~$\Delta'$ by
 \[
  \se{a} := \se{a}' \bullet \left( \bigcup_{w \in W} \se{w}' \right) \bullet \se{\hashc}'\,,
 \]
 where $\se{\cdot}'$ refers to the local forcing relation~$\se{\cdot}$ defined above,
  where the definition is applied to the (non-probabilistic) vPDA~$\Delta'$.
For $a \in \Sigmai$ and $a \in \Sigmar$ we analogously define local $(Q \Gamma, Q \Gamma)$-
 and $(Q \Gamma, Q)$-forcing relations $\se{a}$, respectively.
The fact that these are valid forcing relations in~$\Delta'$ follows from the structure of the reduction,
 where each transition is mapped to three consecutive transitions in~$\Delta'$.

\begin{lemma}
 For all $a \in \Sigma$, the relation $\se{a}$ can be computed in exponential time.
\end{lemma}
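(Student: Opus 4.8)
The plan is to compute $\se{a}$ without ever materialising the relations over the exponentially many auxiliary symbols $\stacksymba{T}$. First I would locate exactly where the blow-up sits. The relation $\se{a}'$ has sources among head pairs over $Q\Gamma$ and targets built from pairs $(p\stacksymba{d_1},q\stacksymba{d_2})$; since $\Delta'$ has only one symbol $\stacksymba{d}$ per rule of $\Delta$, there are only polynomially many such symbols, so $\se{a}'$ involves polynomially many relevant stack symbols and is computable in time exponential in $|\Delta|$ (and not in $|\Delta'|$). Likewise, for a fixed pair $(\stacksymba{T_1},\stacksymba{T_2})$ and a fixed set $A$ of real configuration pairs, the condition $(\stacksymba{T_1},\stacksymba{T_2})\mathrel{\se{\hashc}'}A$ unfolds by definition into ``$\exists\,p'\alpha'\in T_1\ \forall\,q'\beta'\in T_2:(p'\alpha',q'\beta')\in A$'' (or the symmetric disjunct), which is a polynomial-time test that I would evaluate on demand rather than store. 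The single troublesome factor is $\bigcup_{w\in W}\se{w}'$, whose targets range over pairs of the $\stacksymba{T}$-symbols, of which there are exponentially many: writing this relation out explicitly would already cost doubly-exponential time.

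The key step is therefore to compute the composite $G := \left(\bigcup_{w\in W}\se{w}'\right)\bullet\se{\hashc}'$ directly, folding the second and third simulation steps together so that the exponentially many $\stacksymba{T}$-symbols never appear in the stored data. Both the domain of $G$ (pairs $(p\stacksymba{d_1},q\stacksymba{d_2})$) and its codomain (sets of pairs over $Q\Gamma\Gamma$) range over polynomially many objects, so $G$ has at most exponentially many entries. To populate it I would use the game reading of $\bullet$: $(p\stacksymba{d_1},q\stacksymba{d_2})\mathrel{G}A$ holds exactly when Attacker can pick a set $T_1\subseteq\support{d_1}$ (or symmetrically a $T_2$), inducing the weight $w=d_1(T_1)\in W$, so that for every Defender response $T_2\subseteq\support{d_2}$ with $d_2(T_2)\ge w$ the $\hashc$-move reaches $A$, i.e.\ $(\stacksymba{T_1},\stacksymba{T_2})\mathrel{\se{\hashc}'}A$. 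For a fixed $A$ this is checked by enumerating all subsets $T_1,T_2$ of the two supports; there are at most $2^m$ of each, where $m$ is the maximal support size of a rule of $\Delta$, and the innermost test is polynomial, so one membership check of $G$ costs exponential time. Ranging over the polynomially many source pairs and the $2^{\mathrm{poly}}$ candidate sets $A\subseteq Q\Gamma\Gamma\times Q\Gamma\Gamma$ keeps the whole computation of $G$ within exponential time.

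Finally I would obtain $\se{a}=\se{a}'\bullet G$. Since $\se{a}'$ mentions only the polynomially many $\stacksymba{d}$-symbols and $G$ has polynomially many relevant pairs on both ends, this last composition is the straightforward union-of-forcing-sets operation from the definition of $\bullet$ and stays within exponential time; that the outcome is a genuine $(Q\Gamma,Q\Gamma\Gamma)$-forcing relation follows from the structural correctness of the reduction, where each original step becomes three consecutive steps. The cases $a\in\Sigmai$ and $a\in\Sigmar$ are handled identically, with $\se{\hashi}'$ and $\se{\hashr}'$ in place of $\se{\hashc}'$, and are in fact easier, as the third step then pushes one or zero symbols and the codomain pairs range over $Q\Gamma\times Q\Gamma$ or $Q\times Q$. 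The one place demanding care, and the main obstacle, is precisely the middle factor $\bigcup_{w\in W}\se{w}'$: the argument must ensure that the exponentially many $\stacksymba{T}$-symbols are only ever quantified over transiently, inside the subset enumeration of a single membership test, and never tabulated as a standalone relation, since otherwise the nested exponentials reappear and only a double-exponential bound would follow.
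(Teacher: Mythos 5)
Your proof is correct, but it takes a genuinely different route from the paper's. The paper reduces the computation to the same exponentially many membership queries ``does $pXqY \mathrel{\se{a}} A$ hold?'', but it discharges each query with an \emph{alternating} polynomial-space procedure that directly plays the three rounds of the bisimulation game in~$\Delta'$ (existential/universal choice of the distribution symbols $\langle d\rangle,\langle e\rangle$, then of a weight $w$ and subset symbols $\langle T\rangle,\langle U\rangle$, then of elements, followed by a membership test in~$A$), and concludes by $\textup{APSPACE}=\textup{EXPTIME}$. There, alternation is what keeps the exponentially many symbols $\langle T\rangle$ implicit: each player writes down only one polynomial-size subset at a time. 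You achieve the same effect deterministically, by folding the two final factors into $G = \bigl(\bigcup_{w\in W}\se{w}'\bigr)\bullet\se{\#_c}'$ and enumerating the at most $2^m\cdot 2^m$ subset pairs only transiently, inside a single membership test of~$G$ --- the same key insight in a different algorithmic guise. Your version is more elementary (no appeal to a complexity-class equality) and makes the exact source of the single-exponential cost visible. Its price is that full rigor needs three small structural facts that the paper's game formulation never has to state: associativity of~$\bullet$ (so that $\se{a}'\bullet G$ really equals the paper's triple composite), upward closure of the local relations in their target sets (so that your $\exists T_1\,\forall T_2$ game reading of~$G$ coincides with the relational composite, whose definition demands an exact union decomposition of~$A$), and the fact that restricting Attacker to the maximal weight $w=d_1(T_1)$ loses no generality. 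All three are true and easy to verify, but a complete write-up should record them.
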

\begin{proof}[Proof of the lemma]
We assume $a \in \Sigmac$; the other cases are similar.
It suffices to show that, given $pX qY \in Q\Gamma Q\Gamma$ and $A \subseteq {Q\Gamma\Gamma Q\Gamma\Gamma}$,
 we can check in exponential time whether $pX qY \mathrel\se{a} A$ holds.
To show this we give an alternating PSPACE algorithm that checks whether $pX qY \mathrel\se{a} A$ holds.
Then the lemma follows from APSPACE = EXPTIME.
We formulate the APSPACE algorithm in terms of an existential player (corresponding to Attacker) and a universal player (corresponding to Defender):

\newcommand{\ind}{\mbox{}\quad}
\newcommand{\indd}{\ind\quad}
\newcommand{\inddd}{\indd\quad}
\newcommand{\indddd}{\inddd\quad}
\begin{minipage}{\textwidth}
 input: $a \in \Sigmac$ and $pX qY \in Q\Gamma Q\Gamma$ and $A \subseteq {Q\Gamma\Gamma Q\Gamma\Gamma}$ \\
 return: whether $pX qY \mathrel\se{a} A$ holds \\
 \ind ex.~player: choose either:\\
 \indd ex.~player: choose $d$ s.t.\ $pX \ctran{a} p \stacksymba{d}$ \\
 \indd un.~player: choose $e$ s.t.\ $qY \ctran{a} q \stacksymba{e}$ \\
 \ind or: \\
 \indd ex.~player: choose $e$ s.t.\ $qY \ctran{a} q \stacksymba{e}$ \\
 \indd un.~player: choose $d$ s.t.\ $pX \ctran{a} p \stacksymba{d}$ \\
 \ind ex.~player: choose either:\\
 \indd ex.~player: choose $w, T$ s.t.\ $p \stacksymba{d} \ctran{w} p \stacksymba{T}$ \\
 \indd un.~player: choose $U$ s.t.\ $q \stacksymba{e} \ctran{w} q \stacksymba{U}$ \\
 \ind or: \\
 \indd ex.~player: choose $w, U$ s.t.\ $q \stacksymba{e} \ctran{w} q \stacksymba{U}$ \\
 \indd un.~player: choose $T$ s.t.\ $p \stacksymba{d} \ctran{w} p \stacksymba{T}$ \\
 \ind ex.~player: choose either:\\
 \indd ex.~player: choose $p'X'X''$ s.t.\ $p \stacksymba{T} \ctran{\hashc} p'X'X''$ \\
 \indd un.~player: choose $q'Y'Y''$ s.t.\ $q \stacksymba{U} \ctran{\hashc} q'Y'Y''$ \\
 \ind or: \\
 \indd ex.~player: choose $q'Y'Y''$ s.t.\ $q \stacksymba{U} \ctran{\hashc} q'Y'Y''$ \\
 \indd un.~player: choose $p'X'X''$ s.t.\ $p \stacksymba{T} \ctran{\hashc} p'X'X''$ \\
 \ind return whether $p'X'X''q'Y'Y'' \in A$ holds
\end{minipage}
\end{proof}

We can compute the largest $(Q \Gamma, Q)$-forcing relation~$\hat F$ as above,
 i.e., by solving the equation system
\begin{align*}
 F = \bigcup_{a \in \Sigmar} \se{a} \quad\cup\quad \bigcup_{a \in \Sigmai} \se{a} \bullet F \quad\cup\quad
   \bigcup_{a \in \Sigmac} \se{a} \bullet F_{/\Gamma} \bullet F\
\end{align*}
using simple Kleene iteration.
As above, the iteration terminates after at most $|Q \Gamma Q \Gamma \times 2^{Q Q}|$ rounds, each of which takes at most exponential time.
This completes the proof.
\end{proof}

}{}

\end{document}